 \newcommand{\bs}{\bigskip}
 \newcommand{\ms}{\medskip}
 \newcommand{\s}{\smallskip}
 \newcommand{\hs}[1]{\hspace*{ #1 mm}}
 \newcommand{\vs}[1]{\vspace*{ #1 mm}}
 \newcommand{\setempty}{\mathrm{\O}}
 \newcommand{\real}{\mathbb{R}}
 \newcommand{\nat}{\mathbb{N}}
 \newcommand{\integer}{\mathbb{Z}}
 \newcommand{\prob}{{\mathrm{Prob}}}
 \newcommand{\co}{\mathrm{co}\mbox{-}}
 \newcommand{\ie}{\textrm{i.e.},\hspace*{2mm}}
 \newcommand{\eg}{\textrm{e.g.},\hspace*{2mm}}
 \newcommand{\etalc}{\textrm{et al.}}
 \newcommand{\PP}{{\cal P}}
 \newcommand{\p}{\mathrm{P}}
 \newcommand{\bpp}{\mathrm{BPP}}
 \newcommand{\pp}{\mathrm{PP}}
 \newcommand{\cequalp}{\mathrm{C}_{=}\mathrm{P}}
 \newcommand{\oneplin}{1\mbox{-}\mathrm{PLIN}}
 \newcommand{\onecequallin}{1\mbox{-}\mathrm{C}_{=}\mathrm{LIN}}
 \newcommand{\onebplin}{1\mbox{-}\mathrm{BPLIN}}
 \newcommand{\onedlin}{1\mbox{-}\mathrm{DLIN}}
 \newcommand{\reg}{\mathrm{REG}}
 \newcommand{\cfl}{\mathrm{CFL}}
 \newcommand{\dcfl}{\mathrm{DCFL}}
 \newcommand{\matrices}[4]{\left( \begin{array}{cc} #1 & #2 \\%
      #3 & #4   \end{array}\right)}
 \newcommand{\ninematrices}[9]{\left( \begin{array}{ccc} 
      #1 & #2 & #3 \\%
      #4 & #5 & #6 \\%
      #7 & #8 & #9    \end{array}\right)}
\theoremstyle{plain}
 \newtheorem{theorem}{Theorem}[section]
 \newtheorem{lemma}[theorem]{Lemma}
 \newtheorem{proposition}[theorem]{Proposition}
 \newtheorem{corollary}[theorem]{Corollary}
 \newtheorem{claim}{Claim}
 \newenvironment{proof}{\par \noindent
            {\bf Proof. \hs{2}}}{\hfill$\Box$ \vspace*{3mm}}
 \newenvironment{proofof}[1]{\vspace*{5mm} \par \noindent
         {\bf Proof of #1.\hs{1}}}{\hfill$\Box$ \vspace*{3mm}}
\newif\ifnotesw\noteswtrue
\ifnotesw\marginpar[\hfill\(\top\)]{\(\top\)}\fi}%
\ifnotesw\marginpar[\hfill\(\bot\)]{\(\bot\)}\fi}
\newcommand{\mnote}[1]%
   {\ifnotesw\marginpar%
	  [{\scriptsize\begin{minipage}[t]{\marginparwidth}
	  \raggedleft#1%
		  \end{minipage}}]%
	  {\scriptsize\begin{minipage}[t]{\marginparwidth}
	  \raggedright#1%
		  \end{minipage}}%
    \fi}
\newcommand{\ignore}[1]{}
\newcommand{\track}[2]{[{\tiny \begin{array}{c} #1 \\%
      #2 \end{array} }]}
\newcommand{\cent}{{|}\!\!\mathrm{c}}
\newcommand{\dollar}{\$}
\newcommand{\averreg}{\mathrm{Aver\mbox{-}REG}}
\begin{document}
 \pagestyle{plain}
 
\begin{center}
{\Large {\bf The Roles of Advice to One-Tape Linear-Time \s\\
Turing Machines and Finite Automata}}\footnote{An extended abstract  appeared in the Proceedings of the 20th International Symposium on Algorithms and Computation (ISAAC 2009), Lecture Notes in Computer Science, Springer-Verlag, Vol.5878, pp.933--942, December 16--18, Hawaii, USA, 2009.} \bs\\
{\sc Tomoyuki Yamakami}\footnote{Current Affiliation: 
Department of Information Science, University of Fukui, 
3-9-1 Bunkyo, Fukui, 910-8507 Japan} \bs\\
\end{center}


\begin{quote}
{\bf Abstract.}\hs{1}
We discuss the power and limitation of various ``advice,'' when it is given particularly to weak computational models of one-tape linear-time Turing machines and one-way finite (state) automata. Of various advice types, we consider deterministically-chosen advice (not necessarily algorithmically determined) and randomly-chosen advice (according to certain probability distributions). 
In particular, we show that certain weak machines can be significantly enhanced in computational power when randomized advice is provided  
in place of deterministic advice. 

\ms

{\bf Keywords:} one-tape linear-time Turing machine; finite automaton; advice;  randomized advice; pseudorandom; zero-sum game

{\bf 2010 Mathematics Subject Classification:} 
03D15, 68Q05, 68Q15, 68Q45, 68Q87
\end{quote}
\ms

\section{Advice and Weak Computational Models}

When a machine has a clear, limited operational capability, how can we enhance its computational power beyond its plausible limitation? A straightforward  way is to provide a piece of supplemental external information besides original input data so that such extra knowledge helps the machine solve a target problem efficiently. 
A notion of so-called {\em advice} is such additional information,  
 which depends only on the size of inputs, given to the underlying machine. 
Since Karp and Lipton \cite{KL82} initiated it in early 1980s, the study of advice has attracted numerous researchers in the fields of, \eg computational complexity and cryptography. 
To grip a better understanding of the roles of the advice, we intend 
to take  a rather simple but direct approach toward an investigation 
of the strengths and limitations of the advice, particularly on weak 
models of advised computations. 

{\em One-tape (two-way one-head) Turing machines} (or 1TMs, in short) running in linear time could be one of the most basic types of computational models ever discussed in  computational complexity theory. 
A theory of linear-time 1TMs has been studied intermittently since mid 1960s (see \cite{TYL04} for references). An immediate advantage of studying such weak models is that we can prove anticipated class separations without relying on any {\em unproven} assumption, such as the existence of one-way functions. Moreover, we can conduct a precise analysis of advice when its underlying computation is limited in power. 
As were shown in \cite{Hen65,Kob85,TYL04}, certain variants of this 1TM 
model are closely tied to {\em one-way finite (state) automata} with 
constant memory space, which could be viewed to run  
a simple form of memoryless online algorithms. 
Advised computations of one-way deterministic finite automata (or 1dfa's, in short) were initially studied in \cite{DH95,TYL04} and deterministic linear-time 1TMs with advice were discussed in \cite{TYL04}. Interestingly, it was shown in \cite{TYL04} that deterministic linear-time 1TMs that take linear-size advice are no more powerful than 1dfa's with advice of size equal to input size. This characterization makes it easier for us to handle the models of linear-time 1TMs. Recently, a series of studies \cite{Yam08a,Yam09a} revealed the power and limitation of advice, when given to its underlying finite automata. 
(Another direction with advice was recently indicated in \cite{AF10}.) 
In addition to standard (deterministic) advice, we also study in this paper  {\em randomized advice} (in which each advice string is chosen at random according to a certain probability distribution), which may allow  its underlying machines to err with, \eg  bounded-error probability 
(\ie at most a certain constant probability away from $1/2$). 
A piece of such randomized advice gives a significantly high power to the underlying machines.  

Concerning the aforementioned models of linear-time 1TMs, we shall 
focus our study only on the following four language families: $\onedlin$ (deterministic), $\onebplin$ (bounded-error probabilistic), $\oneplin$ (unbounded-error probabilistic), and $\onecequallin$ (error probability exactly 1/2), introduced in \cite{TYL04}. These language families can be viewed as ``scaled-down'' versions of the well-known complexity classes, $\p$, $\bpp$, $\pp$, and $\cequalp$. Some of their advised counterparts are succinctly denoted as $\onedlin/lin$, $\oneplin/lin$, and $\onecequallin/lin$. Moreover, for given randomized advice, we write some of their corresponding families as $\onebplin/Rlin$, $\oneplin/Rlin$, and  $\onecequallin/Rlin$. Similarly, based on the finite automata models, we define $\reg/n$ and $\cfl/n$ respectively as the families of regular languages with advice 
and of context-free languages with advice. We further introduce two additional  language families equipped with randomized advice: 
$\reg/Rn$ and $\cfl/Rn$.

In this paper, we shall present new collapses and separations among the above-mentioned advised language families. Our results are summarized 
in Figure~\ref{fig:hierarchy}. To obtain these results, we shall show new characterizations of advised families and also their new structural properties, which are interesting on their own right. We hope that this paper opens a door to a rich research area that sits between computational complexity theory and formal language and automata theory. 

\begin{figure}[t]
\begin{center}
\includegraphics*[width=7.0cm]{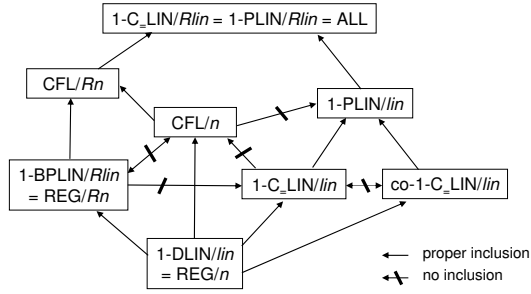}
\caption{A hierarchy of advised language families}\label{fig:hierarchy} 
\end{center}
\end{figure}

\section{Basic Notions and Notations}\label{sec:basic}

We briefly explain fundamental notions and notations used in the subsequent sections. 
Let $\nat$ be the set of all nonnegative integers and set $\nat^{+} = \nat-\{0\}$. For any pair $m,n\in\nat$ with $m\leq n$, the notation $[m,n]_{\integer}$ denotes the integer interval $\{m,m+1,\ldots,n\}$. Conventionally, we write $[m]$ for $[1,m]_{\integer}$. 
Let $\real^{\geq0}$ be the set of all nonnegative real numbers. 
A function $f:\nat\rightarrow\real^{\geq0}$ is said to be 
{\em negligible} if $f(n)\leq 1/p(n)$ for any non-zero polynomial 
$p$ and for all but finitely-many numbers $n$ in $\nat$. 
An {\em alphabet} is a nonempty finite set $\Sigma$ and a {\em string} over $\Sigma$ is a finite series of symbols taken from $\Sigma$. Let $\lambda$ express the {\em empty string}. 
Given a string $x$ over $\Sigma$ and a symbol $\sigma\in\Sigma$, the notation $\#_{\sigma}(x)$ denotes the number of all 
occurrences of $\sigma$ in $x$. The notation $x^{R}$ expresses the string $x$ in reverse.  
The set of all strings over $\Sigma$ is denoted $\Sigma^*$ and a 
{\em language} over $\Sigma$ is a subset of $\Sigma^*$.  For notational convenience, for any language $S$, we define $S(x)=1$ if $x\in S$; 
$S(x)=0$ if $x\not\in S$. The notation $\overline{S}$ for the 
language $S$ is 
the {\em complement} of $S$; namely, 
$\overline{S} = \Sigma^*\setminus S$.   
The {\em length}  of a string $x$, denoted $|x|$, is the total number of occurrences of symbols in $x$. A {\em length function} is a map from $\nat$ to $\nat$. For any length $n\in\nat$, let 
$\Sigma^n = \{x\in\Sigma^*\mid |x|=n\}$. 
A {\em probability ensemble} $\mu$ over $\Sigma^*$ is an infinite series $\{\mu_n\}_{n\in\nat}$, in which each $\mu_n$ is a probability distribution over $\Sigma^n$ (\ie $0\leq \mu_{n}(x)\leq 1$ for any $x\in\Sigma^n$ with $\sum_{x\in\Sigma^n}\mu_n(x)=1$). 

Our basic model of computation is {\em one-tape (or single-tape) two-way one-head off-line Turing machines} (or 1TMs), each of which can be expressed as a sextuple $(Q,\Sigma,\delta,q_0,Q_{acc},Q_{rej})$, where $Q$ is a finite set of {\em inner states}, $\Sigma$ is an alphabet, $q_0$ ($\in Q$) is the {\em initial state}, $\delta$ is a {\em transition function}, $Q_{acc}$ ($\subseteq Q$) is a set of {\em accepting states}, and $Q_{rej}$ ($\subseteq Q$) is a set of {\em rejecting states}. 
Write $Q_{non}$ for the set $Q \setminus (Q_{acc}\cup Q_{rej})$.  This machine $M$ is equipped with one input/work tape, on which an input 
string is initially written, surrounded by two {\em endmarkers} $\cent$ and $\dollar$,  
and a tape head either moves in both directions (to the left or to the right) 
or stays still, after starting from the left endmarker $\cent$. 
For our convenience, let $\check{\Sigma} = \Sigma\cup\{\cent,\dollar\}$.  
When $\delta$ is of the form $\delta:Q_{non}\times\check{\Sigma}\rightarrow Q\times\check{\Sigma}\times\{L,N,R\}$, $M$ is called 
{\em deterministic}. To the contrary, $M$ is {\em probabilistic} if $\delta$ satisfies  $\delta:Q_{non}\times\check{\Sigma}\rightarrow \PP(Q\times\check{\Sigma}\times\{L,N,R\})$, where $\PP(A)$ is the power set of $A$. In this case, each move $(q,\sigma)\mapsto (q'\sigma',d)\in\delta(q,\sigma)$ of $M$ is associated with a {\em transition probability}. If $M$ is deterministic (probabilistic, resp.), 
then we succinctly call it {\em 1DTM} ({\em 1PTM}, resp.). 
The {\em extended transition function} $\hat{\delta}$ induced from $\delta$ is recursively defined as $\hat{\delta}(q,\lambda)= q$ and $\hat{\delta}(q,x\sigma) = \delta(\hat{\delta}(q,x),\sigma)$ for 
each symbol $\sigma\in\Sigma$ and each string $x\in\Sigma^*$.

By adopting Michel's {\em strong definition} for machine's running time \cite{Mic91}, we say that $M$ runs in {\em linear time} if the longest computation path (even in a case of probabilistic computations) of $M$ 
on any input $x$ of length $n$ is bounded from above by a certain  
fixed 
linearly-bounded function in $n$; in other words, every computation 
tree of input size $n$  has hight of at most $O(n)$. (See \cite{TYL04} 
for more discussions on the topics of this strong definition of 
running time.) 

{\em One-way finite (state) automata} are a special case of those linear-time 1TMs with the following restrictions: (i) a tape head always moves from left to right without stopping, (ii) it halts just after scanning the right endmarker $\dollar$, and (iii) the tape is read-only. Let $\reg$, $\cfl$, and $\dcfl$ denote respectively the families of {\em regular} languages, of {\em context-free} languages, and of {\em deterministic context-free} languages.   

In this paper, all 1PTMs use only {\em rational} transition probabilities. We say that a 1PTM $M$ {\em recognizes} a language $L$ if, for every input string $x$, (i) if $x\in L$, then $M$ accepts $x$ with probability $>1/2$ and (ii) if $x\not\in L$, then $M$ rejects $x$ with probability 
$\geq 1/2$, where ``probability'' is taken over all inner coin tosses of $M$ on the input $x$. Moreover, $M$ has {\em bounded error} if there exists a constant (also called an {\em error bound}) 
$\varepsilon\in[0,1/2)$ such that, for every input string $x$, either $M$ accepts $x$ with probability at least $1-\varepsilon$ or $M$ rejects $x$ with probability at least $1-\varepsilon$; otherwise, $M$ is said to have {\em unbounded error}. Following \cite{TYL04}, we denote by  
$\onedlin$ ($\onebplin$, $\oneplin$, resp.) the family of all languages that are recognized by 1DTMs (1PTMs with bounded error, 1PTMs with unbounded error, resp.) in linear time. The family $\onecequallin$ (pronounced ``one C equal LIN'') \cite{TYL04} is the collection of all languages $L$ for which there exist a linear-time 1PTM $M$ that satisfies the following condition: for every input $x$, $x\in L$ iff $M$ accepts $x$ with probability exactly $1/2$. 

To feed a piece of supplemental information together with an 
input string to 1TMs, 
we use a ``track'' notation of \cite{TYL04}. For two symbols $\sigma\in\Sigma$ and $\tau\in\Gamma$, the notation $\track{\sigma}{\tau}$ expresses a new symbol made from $\sigma$ and $\tau$. For a 1TM equipped with an input/work tape, this symbol $\track{\sigma}{\tau}$ is written in a single cell, which consists of two tracks, whose upper track contains $\sigma$ and the lower track contains 
$\tau$. For two strings $x$ and $y$ of the same length $n$, 
$\track{x}{y}$ succinctly denotes the string $\track{x_1}{y_1}\track{x_2}{y_2}\cdots\track{x_n}{y_n}$ of length $n$, provided that $x=x_1x_2\cdots x_n$ and $y=y_1y_2\cdots y_n$. Notice that a tape head of the 1TM scans two symbols $\sigma$ and $\tau$ in the symbol $\track{\sigma}{\tau}$ simultaneously as a single symbol.  This track notation can be further extended to the case where $|x|\neq|y|$. If $|x|<|y|$ and $y=y_1y_2$ with $|x|=|y_1|$, then $\track{x}{y}$ denotes $\track{x}{y_1}\track{\#^{|x|-|y|}}{y_2}$; if $|x|>|y|$ and $x=x_1x_2$ with $|y|=|x_1|$, then $\track{x}{y}$ denotes $\track{x_1}{y}\track{x_2}{\#^{|x|-|y|}}$, where $\#$ is a special 
symbol representing a ``blank.'' 

For our later use, we also give a description of probabilistic finite automata. 
Here, we assume that all vectors are always expressed as {\em row vectors}. The notation $M^{T}$ for a matrix $M$ denotes the {\em transposed matrix} of $M$. A matrix is called {\em stochastic} if 
every row of it sums up to exactly $1$.  
A {\em one-way (rational) probabilistic finite automaton} (or 1pfa, in short) $M$ is a quintuple  $(Q,\Sigma,\nu_{ini},\{M_{\sigma}\}_{\sigma\in\check{\Sigma}},F)$, 
where $Q$ is a finite set of inner states, $\Sigma$ is an alphabet, 
$\nu_{ini}$ is an {\em initial state vector} with rational entries, each $M_{\sigma}$ is a $|Q|\times|Q|$ stochastic matrix  with rational entries, and $F$ ($\subseteq Q$) is a set of {\em final states}. 
The set $F$ induces a vector $\xi_{F}$ defined as follows: for each 
state $q\in Q$, the $q$-entry of $\xi_{F}$ has value $1$ if $q\in F$, and $0$ otherwise. Without loss of generality, we can assume that 
$\nu_{ini}$ has always value $1$ in its $q_{0}$-entry and $0$ in all the other entries. 
For each sequence $x=\sigma_1\sigma_2\cdots\sigma_n$ in 
$\check{\Sigma}^n$, $M_{x}$ is shorthand for $M_{\sigma_1}M_{\sigma_2}\cdots M_{\sigma_n}$. The {\em acceptance} ({\em rejection}, resp.) {\em probability} of $M$ on input $x\in\Sigma^*$ is defined as $p_{acc}(x) = \nu_{ini}M_{\cent x\dollar}\xi_{F}^{T}$ ($p_{rej}(x) = 1- p_{acc}(x)$, resp.).  

\section{Deterministic Computation with Advice}

We formally define the notion of {\em (deterministic) advice}\footnote{In 
the literature, there are at least two different formulations of ``advice'' for the model of one-tape machines: Damm and Holzer's \cite{DH95} and Tadaki, Yamakami, and Lin's \cite{TYL04}. These definitions are, however, computationally equivalent for, \eg polynomial time-bounded computations. Theorem \ref{REG/n-character} gives an implicit justification 
for the choice of our advice model.} and describe how to use it on one-tape linear-time 
Turing machines and finite automata. An {\em advice function} $h$ is a function mapping $\nat$ to $\Gamma^*$, where $\Gamma$ is a certain alphabet (which is particularly 
referred to as an {\em advice alphabet}). 
The advised language family $\onedlin/lin$ ($\reg/n$, resp.) is defined in \cite{TYL04} as the collection of all languages 
$S$ over alphabets $\Sigma$ such that there exist an advice alphabet $\Gamma$, an advice function $h:\nat\rightarrow\Gamma^*$, 
and a linear-time 1DTM (a 1dfa, resp.) $M$ satisfying the following conditions: (i) there are two constants $c,d>0$ such that, for every length $n\in\nat$, $|h(n)|\leq cn+d$ ($|h(n)|=n$, resp.) 
and (ii) for every string $x\in\Sigma^*$, $x\in S$ iff $M$ accepts $\track{x}{h(|x|)}$ (notationally, $M(\track{x}{h(|x|)})= S(x)$). 
Note that $\reg/n$ contains non-regular languages, for instance, the language $L_{eq}=\{0^n1^n\mid n\in\nat\}$.  
Surprisingly, the above two advised families, $\onedlin/lin$ and 
$\reg/n$,  coincide [10, Proposition 4.11]. 

\begin{lemma}\label{onedlin-vs-reg}
{\rm \cite{TYL04}}\hs{1}
$\onedlin/lin = \reg/n$.
\end{lemma}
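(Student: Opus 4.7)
The containment $\reg/n\subseteq\onedlin/lin$ is immediate, since a 1dfa is already a linear-time 1DTM and advice of length exactly $n$ is a special case of linear-length advice. For the nontrivial direction the plan is to invoke Hennie's classical one-tape time lower bound together with a careful repackaging of the advice string.

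Given $L\in\onedlin/lin$ witnessed by a linear-time 1DTM $M$ and an advice function $h\colon\nat\to\Gamma^*$ with $|h(n)|\leq cn+d$, I will regard $M$ as an ordinary 1DTM whose input alphabet is the two-track alphabet $\check{\Sigma}\times\Gamma$. Its running time is $O(n)$ on track inputs of length $\Theta(n)$, hence $o(n\log n)$, and so Hennie's theorem (every $o(n\log n)$-time one-tape DTM accepts a regular language) yields that the full language $L(M)\subseteq(\check{\Sigma}\times\Gamma)^*$ is regular. Fixing a 1dfa $D$ for $L(M)$, by construction $x\in L$ iff $D$ accepts $\track{x}{h(|x|)}$.

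It now suffices to replace $D$ by a 1dfa $N$ equipped with advice $h'(n)$ of length exactly $n$ that reproduces this decision. The key observation is that, on $\track{x}{h(|x|)}$, after $D$ has consumed the first $n$ aligned pairs its remaining computation depends only on its current state and on $n$, not on $x$: for $|h(n)|\leq n$ the suffix still to be read is just $\dollar$, while for $|h(n)|>n$ it is $(\#,(v_n)_1)\cdots(\#,(v_n)_{|v_n|})\dollar$ with $v_n$ the portion of $h(n)$ past position $n$. I package this tail effect into a single tail function $F_n\colon Q_D\to\{0,1\}$ that records, for each state $q$, whether $D$ would accept from $q$ on the remaining suffix. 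Taking advice alphabet $\Gamma'=(\Gamma\cup\{\#\})\times\Delta$ with $\Delta$ large enough to encode any such function, I set $h'(n)_i=(g_i,\tau_i)$ where $g_i$ is $h(n)_i$ if $i\leq|h(n)|$ and $\#$ otherwise, while $\tau_1$ carries the full description of $F_n$ and $\tau_i$ ($i>1$) is a fixed dummy symbol. The 1dfa $N$ then reads $F_n$ from its first cell into finite control, simulates $D$ on the $n$ scanned pairs, and upon reaching $\dollar$ outputs $F_n$ applied to the final simulated state; the finitely many small $n$ for which the encoding is too tight are absorbed into $N$'s transition table.

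The main conceptual point to verify is that Hennie's theorem really applies on the enlarged two-track input, which it does because the strong linear-time convention bounds the entire computation tree by $O(n)$ steps even on inputs of length up to $(c+1)n+d$. Everything else is routine bookkeeping: the tail function $F_n$ ranges over the finite set $\{0,1\}^{Q_D}$, so it comfortably fits in a single advice symbol, and the two length regimes $|h(n)|\leq n$ and $|h(n)|>n$ merge into the single uniform construction above.
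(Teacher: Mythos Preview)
Your argument is correct. Note that the paper does not prove this lemma itself---it is quoted from \cite{TYL04}---so there is no in-paper proof to match against directly. That said, the technique \cite{TYL04} uses (and which this paper reuses for the probabilistic analogues, e.g.\ Claim~\ref{reduce-1PTM-1pfa} and Lemma~\ref{1C=LIN/lin-character}) is the \emph{folding-machine} construction: one first folds the $O(n)$-cell work tape into $O(1)$ tracks confined to the input area of length $n$, and only then reduces to a one-way automaton via a crossing-sequence argument, with the advice-length normalization absorbed into the fold. Your route separates the two concerns: invoke Hennie's $o(n\log n)$ theorem once as a black box on the full two-track language $L(M)$ to get regularity, and afterwards normalize the advice length to exactly $n$ by precomputing the tail function $F_n\in\{0,1\}^{Q_D}$. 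Both reach the same place; yours is shorter because it outsources the crossing-sequence work to Hennie, while the folding approach is more self-contained and extends uniformly to the probabilistic families handled later in the paper. One small wording nit: the linear-time bound you need for Hennie is linear in the length of the \emph{two-track} input, not in $|x|$; the paper's strong running-time convention (``any input of length $n$'') delivers exactly that, so the appeal is sound, but your phrase ``even on inputs of length up to $(c+1)n+d$'' slightly blurs the two length parameters.
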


In a polynomial-time setting, Karp-Lipton's advice naturally induces  {\em non-uniform computations}. Similarly, languages in $\reg/n$ can be  characterized in a certain non-uniform fashion. Here, we present 
a simple form of such non-uniform characterization of every 
language in $\reg/n$. 

\begin{theorem}\label{REG/n-character}
For any language $S$ over an alphabet $\Sigma$, the following two statements are equivalent. Let $\Delta = \{(x,n)\in \Sigma^*\times\nat\mid |x|\leq n\}$. 
\begin{enumerate}\vs{-1}
\item $S$ is in $\reg/n$.
\vs{-2}
\item There is an equivalence relation $\equiv_{S}$ over $\Delta$ such that 
\begin{enumerate}\vs{-2}
\item[(a)] the total number of equivalence classes in $\Delta/\equiv_{S}$ is finite, and
\vs{-1}
\item[(b)] for any length $n\in\nat$ and any two strings $x,y\in\Sigma^*$ with $|x|=|y|\leq n$, 
the following holds: $(x,n)\equiv_{S}(y,n)$ iff, for all $z$ with $|xz|=n$, $S(xz) = S(yz)$.
\end{enumerate}
\end{enumerate}
\end{theorem}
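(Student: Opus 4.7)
The plan is to establish the two implications separately.

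For $(1) \Rightarrow (2)$, I would fix a 1dfa $M=(Q, \Sigma, \delta, q_0, F)$ and an advice function $h$ with $|h(n)|=n$ that together recognize $S$. For each pair $(n, \ell)$ with $\ell \leq n$, introduce the length-$\ell$ ``future'' relation $\sim_{n,\ell}$ on $\Sigma^\ell$ defined by $x \sim_{n,\ell} y$ iff $S(xz)=S(yz)$ for every $z \in \Sigma^{n-\ell}$. The key observation is that the inner state of $M$ just after scanning $\track{\cent x}{\cent h(n)[1..\ell]}$ completely determines the residual computation on $\track{z\dollar}{h(n)[\ell+1..n]\dollar}$; consequently, the number of $\sim_{n,\ell}$-classes is at most $|Q|$, uniformly in $(n, \ell)$. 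Next, for each $(n, \ell)$ I would fix an arbitrary injection of the $\sim_{n,\ell}$-classes into $\{1, \ldots, |Q|\}$ and define $(x, n) \equiv_S (y, m)$ iff the resulting labels of $[x]_{\sim_{n,|x|}}$ and $[y]_{\sim_{m,|y|}}$ agree. This yields an equivalence relation on $\Delta$ with at most $|Q|$ classes, and property (b) is immediate from the injectivity of the labelings.

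For $(2) \Rightarrow (1)$, assume $\equiv_S$ is given with $k$ classes $C_1, \ldots, C_k$ and write $c(x, n) \in [k]$ for the class index of $(x, n)$. A preliminary right-congruence lemma comes first: if $(x, n) \equiv_S (y, n)$ with $|x|=|y|=\ell < n$ and $\sigma \in \Sigma$, then $(x\sigma, n) \equiv_S (y\sigma, n)$. This follows from (b) by rewriting each ``$z \in \Sigma^{n-\ell}$'' there in the form ``$z=\sigma w$ with $w \in \Sigma^{n-\ell-1}$'' and applying (b) back. Consequently, for each $n$ and each position $i \in [n]$, a transition $\tau_n^{(i)}\colon [k] \times \Sigma \to [k]$ sending $(c(u, n), \sigma)$ to $c(u\sigma, n)$ is well-defined on reachable class-symbol pairs (for $u$ of length $i-1$). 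Moreover, (b) with $z=\lambda$ implies that, for $|x|=n$, the class $c(x, n)$ determines $S(x)$, so there is a subset $A_n \subseteq [k]$ of ``accepting'' classes. I would then construct a 1dfa with state set $\{q_0, q_{\text{init}}\} \cup \{p_c^A : c \in [k],\ A \subseteq [k]\} \cup \{q_{acc}, q_{rej}\}$ and let its advice $h(n)$ at position $i$ encode the triple $(c(\lambda, n), \tau_n^{(i)}, A_n)$, where the first component is meaningful only at $i=1$. The advice alphabet is finite because $k$ and $|\Sigma|$ are fixed constants. The machine reads $\cent$ from $q_0$ into $q_{\text{init}}$, decodes the position-1 advice to initialize the class state $p_{c(x_1, n)}^{A_n}$, and thereafter simply walks through the class trajectory encoded by the remaining advice; upon reading $\dollar$ in state $p_c^{A}$, it accepts iff $c \in A$. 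The degenerate case $n=0$ is handled by hard-wiring $S(\lambda)$ directly into $\delta(q_0, \cent)$.

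The main obstacle is obtaining, in $(1) \Rightarrow (2)$, a \emph{global} bound on the total number of classes of $\equiv_S$ rather than merely a local bound at each $(n, \ell)$-level: the natural ``residual-function'' characterization yields Myhill--Nerode-style partitions whose class representatives live in function spaces of different shapes (one for each $(n, \ell)$) and are therefore a priori incomparable. The label-pool device above circumvents this by forcing every $(n, \ell)$-partition to share the common finite label set $\{1, \ldots, |Q|\}$; condition (b), which only constrains $\equiv_S$ on same-$n$, same-length pairs, leaves this cross-$(n, \ell)$ identification free to use.
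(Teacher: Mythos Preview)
Your proposal is correct, and the $(2)\Rightarrow(1)$ half matches the paper's argument almost exactly: you prove the same right-congruence property (the paper's Claim~\ref{1dfa-property}(1)), observe that the class at length $n$ determines membership (Claim~\ref{1dfa-property}(2)), and then encode the position-by-position class-transition functions as advice symbols over a finite alphabet. The only cosmetic difference is that you carry the accepting-class set $A_n$ in every advice symbol and in the state, whereas the paper packs it only into the last symbol $h_{n,n}$.

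Where your argument genuinely diverges is $(1)\Rightarrow(2)$. The paper defines $(x,n)\equiv_S(y,m)$ directly by $q_{n,x}=q_{m,y}$, i.e.\ by equality of the reached inner state. That immediately gives the forward half of~(b), but the paper's claimed ``iff'' needs the converse---that equal residual behaviour on \emph{this particular} advice suffix forces equal intermediate state---which is not true of an arbitrary 1dfa and is not justified in the text. Your label-pool device avoids this entirely: you take $\equiv_S$ to be the Nerode-style residual relation $\sim_{n,\ell}$ itself (so~(b) holds by definition), and you secure the global finiteness~(a) by injecting each $\sim_{n,\ell}$-partition into the common index set $\{1,\dots,|Q|\}$, exploiting that~(b) imposes no cross-$(n,\ell)$ constraints. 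This is a cleaner and more robust route to~(a)--(b) than the paper's, at the cost of the resulting $\equiv_S$ being non-canonical (it depends on the chosen injections).
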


\begin{proof}
(1 $\Rightarrow$ 2) 
Assume that $S$ is a language in $\reg/n$ over an alphabet $\Sigma$. 
Take a 1dfa $M=(Q,\Lambda,q_0,\delta,Q_{acc},Q_{rej})$, an  advice alphabet $\Gamma$, and an advice function $h:\nat\rightarrow\Gamma^*$ satisfying that $|h(n)|=n$ and $S = \{x\in\Sigma^* \mid M \text{ accepts } \track{x}{h(|x|)}\}$, where $\Lambda = \{\track{\sigma}{\tau}\mid \sigma\in \Sigma,\tau\in\Gamma\}$. Without loss of generality, we can assume that $|Q_{acc}|=|Q_{rej}|=1$. This is possible by forcing $M$ to enter a unique accepting/rejecting state after scanning the right endmarker $\dollar$. 
For each length $n\in\nat$ and each string $x\in\Sigma^*$, 
we define $q_{n,x}$ as an inner state $q$ for which (i) if $|x|<n$, then $M$ enters $q$ just after reading $\cent\track{x}{w}$, where $w$ is  the first $|x|$ symbols of $h(n)$, and (ii) if $|x|=n$, then $M$ enters $q$ after reading $\cent\track{x}{h(n)}\dollar$. 
Now, let us define the desired relation $\equiv_{S}$ as follows: 
$(x,n)\equiv_{S} (y,m)$ iff $q_{n,x} = q_{m,y}$. 

Condition $(a)$ follows from the fact that $|\Delta/\equiv_{S}| = |\{q_{m,y}\mid m\in\nat,y\in\Sigma^*\}| \leq |Q|$. 
Next, we want to show Condition $(b)$. 
Consider two inputs $\track{xz}{wv}$ and $\track{yz}{wv}$ with $|x|=|y|=|w|$, $|xz|=n$, and $h(n)=wv$. Assume that $(x,n)\equiv_{S}(y,n)$. This means that, after reading $\cent\track{x}{w}$ as well as $\cent\track{y}{w}$, $M$ enters the same inner state, say, $q$. 
Hence, even if the two inputs $\track{xz}{wv}$ and $\track{yz}{wv}$ 
are different, $M$ behaves exactly in the same way during reading $\track{z}{v}\dollar$. As a consequence, for any $x,y,n$ with $|x|=|y|\leq n$, $q_{n,x} = q_{n,y}$ iff, for any $z$ with $|xz|=n$, there exists a 
halting state $q'$ in $Q_{acc}\cup Q_{rej}$ 
satisfying that $q_{n,xz} = q_{n,yz} = q'$. 
Condition $(b)$ then follows immediately.

(2 $\Rightarrow$ 1) 
To make our proof simple, we ignore the empty string and consider only the set $A-\{\lambda\}$. Given a language $S$, an equivalence relation $\equiv_{S}$ is assumed to satisfy Conditions $(a)$-$(b)$. We aim at showing that $S$ belongs to $\reg/n$.  By Condition $(a)$, let $\Delta/{\equiv_{S}} = \{A_1,A_2,\ldots,A_d\}$ for a certain constant $d\in\nat^{+}$. 
For these equivalence classes $A_q$ 
in $\Delta/\equiv_{S}$,  the following two properties hold.

\begin{claim}\label{1dfa-property}
Let $n\in\nat^{+}$, $\sigma\in\Sigma$, $q\in[d]$, and $x,y\in\Sigma^*$ 
with $|x|=|y|<n$.
\begin{enumerate}\vs{-1}
\item If $(x,n),(y,n)\in A_{q}$, then there exists a unique index $q'\in [d]$ such that $(x\sigma,n),(y\sigma,n)\in A_{q'}$. 
\vs{-2}
\item There exist two different indices, say,  $q^{(n)}_{acc}$ and $q^{(n)}_{rej}$ in $[d]$ satisfying that $\{(z,n)\mid |z|=n, z\in S\}\subseteq A_{q^{(n)}_{acc}}$ and $\{(z,n)\mid |z|=n, z\not\in S\}\subseteq A_{q^{(n)}_{rej}}$. 
\end{enumerate}
\end{claim}

\begin{proof}
(1) Let $x$ and $y$ satisfy $|x|=|y|<n$. We want to show that $(x,n)\equiv_{S}(y,n)$ implies $(x\sigma,n)\equiv_{S}(y\sigma,n)$. Note that the ``uniqueness'' requirement follows from the fact that $A_{q_1}\cap A_{q_2}=\setempty$ for any distinct indices 
$q_1,q_2\in[d]$. 
Toward a contradiction, we assume that $(x,n)\equiv_{S}(y,n)$ but $(x\sigma,n)\not\equiv_{S}(y\sigma,n)$. Take any string $z$ with $|z|=n-|x|-1$. Notice that such a $z$ exists since $|x|\leq n-1$. 
First, consider the case where $x\sigma z \in S$. 
Since $(x\sigma,n)\not\equiv_{S}(y\sigma,n)$, by Condition (b), we have $y\sigma z\not\in S$, which further implies $(x,n)\not\equiv_{S}(y,n)$. 
This is obviously a contradiction against our assumption. We therefore conclude that $(x\sigma,n)\equiv_{S}(y\sigma,n)$. The case where $x\sigma z\not\in S$ is similar. 

(2) Assume that $|x|=|y|=n$. As a special case of Condition $(b)$, it follows that 
(*)  $(x,n)\equiv_{S}(y,n)$ iff $S(x)=S(y)$. 
In particular, if $x,y\in S$, then  $(x,n)\equiv_{S}(y,n)$. Therefore, there exists an index $e\in[d]$ such that $\{(x,n)\mid |x|=n,x\in S\}\subseteq A_{e}$. We write this $e$ as $q_{acc}^{(n)}$. Likewise, 
we have $\{(x,n)\mid |x|=n,x\not\in S\}\subseteq A_{e'}$ for another index  $e'\in[d]$. By Statement (*), $e\neq e'$ follows immediately. Write this $e'$ as $q_{rej}^{(n)}$, and we then obtain the claim. 
\end{proof}

Let us return to the proof of the theorem. For any length 
$n\in\nat^{+}$ and any index $i\in[n]$, we define a series of 
finite functions $h_{n,i}:[d]\times \Sigma\rightarrow [d]\cup[d]^{3}$ 
as follows.
\begin{enumerate}
\item[(i)] Let $h_{n,1}(q,\sigma) = q'$ if $(\sigma,n)\in A_{q'}$. 
\vs{-2}
\item[(ii)] For any index $i$ with $1 < i < n$, let $h_{n,i}(q,\sigma) =q'$ if  there exists a string $x$ with $|x|=i-1$ such that $(x,n)\in A_{q}$ and $(x\sigma,n)\in A_{q'}$. 
\vs{-2}
\item[(iii)] Let $h_{n,n}(q,\sigma) = (q',q^{(n)}_{acc},q^{(n)}_{rej})$ if there exists a string $x$ with $|x|=n-1$ such that $(x,n)\in A_{q}$ and $(x\sigma,n)\in A_{q'}$.
\end{enumerate}
Hereafter, we treat each $h_{n,i}$ as a new symbol and define  $\Gamma=\{h_{n,i}\mid n\geq 1, i\in[n]\}$. It is important to note that $\Gamma$ is a finite set. Our advice string $h_n$ of length $n$ is defined to be $h_{n,1}h_{n,2}\cdots h_{n,n}$. 

At this point, we need to show that $h_{n,i}$ is indeed a {\em function}. Consider the case where $1<i<n$. Assuming that $h_{n,i}(q,\sigma)=q_1$ and $h_{n,i}(q,\sigma)=q_2$, we take two strings $x_1$ and $x_2$ of length $<n$ satisfying that $(x_1,n),(x_2,n)\in A_{q}$, $(x_1\sigma,n)\in A_{q_1}$, and $(x_2\sigma,n)\in A_{q_2}$. The uniqueness condition of Claim~\ref{1dfa-property}(1) then yields the desired equality $q_1=q_2$. The  other cases for $h_{n,1}$ and $h_{n,n}$ can be similarly treated. 

Now, let us define a finite automaton $M$ with its transition function $\delta$ as follows. We prepare four new inner states $q_{acc}$, $q_{rej}$, $q'_{acc}$, and $q'_{rej}$, which do not appear in $Q$. Let $\delta(q_0,\cent)=q_0$ and, moreover, let $\delta(q,\track{\sigma}{h_{n,i}}) = h_{n,i}(q,\sigma)$ for every index $i\in[n-1]$. When $h_{n,n}(q,\sigma) = (q',q^{(n)}_{acc},q^{(n)}_{rej})$,  let $\delta(q,\track{\sigma}{h_{n,n}}) = q'_{acc}$ if $q'=q^{(n)}_{acc}$; let $\delta(q,\track{\sigma}{h_{n,n}}) = q'_{rej}$ if $q'=q^{(n)}_{rej}$. Finally, let $\delta(q'_{acc},\dollar) = q_{acc}$ and 
$\delta(q'_{rej},\dollar) = q_{rej}$. 

In this end, we want to show that $x\in S$ iff $M$ accepts $\track{x}{h_n}$. Let $x=\sigma_1\sigma_2\cdots \sigma_n$ and assume that $(\lambda,n)\in A_{q_0}$, $(\sigma_1,n)\in A_{q_1}$,  $(\sigma_1\sigma_2,n)\in A_{q_2}$, $\ldots$, $(x,n)\in A_{q_n}$. First, let us  consider the case where $x\in S$. {}From now, we intend to prove by induction that 
\begin{quote}
\hs{15} (**) \hs{10} $q_i = \hat{\delta}(q_0,\cent\track{\sigma_1\cdots\sigma_i}{h_{n,1}\cdots h_{n,i}})$ for every $i\in[0,n-1]_{\integer}$,
\end{quote}
where $\hat{\delta}$ is the extended transition function induced from $\delta$. The basis case $i=0$ holds since 
$\hat{\delta}(q_0,\cent)=q_0$. {}From the induction hypothesis on 
$i<n-1$, it follows that
\begin{eqnarray*}
\hat{\delta}(q_0,\cent\track{\sigma_1\cdots\sigma_{i+1}}{h_{n,1}\cdots h_{n,i+1}})
&=& \delta(\hat{\delta}(q_0,\cent\track{\sigma_1\cdots\sigma_i}{h_{n,1}\cdots h_{n,i}}), \track{\sigma_{i+1}}{h_{n,i+1}}) \\
&=& h_{n,i+1}(\hat{\delta}(q_0,\cent\track{\sigma_1\cdots\sigma_i}{h_{n,1}\cdots h_{n,i}}), \sigma_{i+1}) \\
&=& h_{n,i+1}(q_i,\sigma_{i+1}) 
\;\;=\;\; q_{i+1}. 
\end{eqnarray*}
Thus, Statement (**) holds. In particular, we have 
$q_{n-1} = \hat{\delta}(q_0,\cent\track{\sigma_1\cdots\sigma_{n-1}}{h_{n,1}\cdots h_{n,{n-1}}})$. Note that, by the definition of $h_{n,n}$,  $h_{n,n}(q_{n-1},\sigma_n) = (q_n,q_{acc}^{(n)},q_{rej}^{(n)})$. 
Since $x\in S$, Claim~\ref{1dfa-property}(2) implies $A_{q_n} = A_{q^{(n)}_{acc}}$, from which we obtain  
\[
\hat{\delta}(q_0,\cent\track{x}{h_n}) 
= \delta(\hat{\delta}(q_0,\cent\track{\sigma_1\cdots\sigma_{n-1}}{h_{n,1}\cdots h_{n,n-1}}), \track{\sigma_{n}}{h_{n,n}})
= \delta(q_{n-1},\track{\sigma_n}{h_{n,n}}) = q'_{acc} 
\]
and thus $\hat{\delta}(q_0,\cent\track{x}{h_n}\dollar) = q_{acc}$. 
This means that $M$ accepts $\track{x}{h_n}$. 
 The other case $x\not\in S$ is similar to the previous case, since the only difference is the final step of the above argument. This completes the proof of Theorem \ref{REG/n-character}.
\end{proof}

\section{Probabilistic Computation with Advice}

Probabilistic computation has been a useful tool for designing many 
practical algorithms. We shall move our attention to linear-time 1PTMs, 
supplemented with deterministic advice. In a similar fashion to $\onedlin/lin$, we 
define two families  $\oneplin/lin$ and $\onecequallin/lin$ by simply modifying the definition of $\onedlin/lin$ using 1PTMs in lieu of 1DTMs. 

\sloppy Earlier, Tadaki \etalc~\cite{TYL04} showed that $\cfl\cap \onecequallin\nsubseteq \reg/n$. {}From their result, since $\reg/n\subseteq \onecequallin/lin$, it immediately follows that $\reg/n$ is properly included in $\onecequallin/lin$ (as well as $\co\onecequallin/lin$).

\begin{proposition}\label{reg-vs-cequallin}
$\reg/n\subsetneqq \onecequallin/lin\cap \co\onecequallin/lin$.
\end{proposition}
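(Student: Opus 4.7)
The containment $\reg/n \subseteq \onecequallin/lin \cap \co\onecequallin/lin$ is the routine direction. Given $L \in \reg/n$ recognized by a 1dfa $N$ with advice $h$, I would construct a linear-time 1PTM $M$ that first tosses a fair coin; on ``heads'' $M$ halts and accepts immediately, while on ``tails'' $M$ simulates $N$ on the two-track input $\track{x}{h(|x|)}$ and then flips $N$'s final verdict. A one-line calculation gives acceptance probability exactly $1/2$ when $x \in L$ and exactly $1$ when $x \notin L$, so $L \in \onecequallin/lin$. Because $\reg/n$ is trivially closed under complementation (swap accepting and rejecting states in the underlying 1dfa), applying the same construction to $\overline{L}$ places $\overline{L}$ in $\onecequallin/lin$, i.e., $L \in \co\onecequallin/lin$.

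For strict inclusion, the plan is to invoke the result of Tadaki, Yamakami, and Lin \cite{TYL04} that $\cfl \cap \onecequallin \nsubseteq \reg/n$ and to choose a witness $L$ of this difference whose complement also lies in $\onecequallin$. Since $\reg/n$ is closed under complementation, $\overline{L} \notin \reg/n$ is automatic; what must be produced is a direct 1PTM certificate that $\overline{L} \in \onecequallin$. The natural candidate is a language arising from a single equality between two linearly computable counts (for instance a palindrome-type or balanced-count language), for which ``acceptance probability $= 1/2$'' is engineered by a symmetric probabilistic gadget that cancels contributions from matched positions. A dual gadget, obtained by prepending one extra fair coin toss that shifts the acceptance probability off $1/2$ precisely when the predicate holds, gives the required 1PTM for $\overline{L}$. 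This produces a single language in $\onecequallin/lin \cap \co\onecequallin/lin$ that is outside $\reg/n$.

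The main obstacle is the second bullet: the ``exactly $1/2$'' condition is brittle, and $\onecequallin$ is not known to be closed under complement in general, so $\overline{L} \in \onecequallin$ does not come for free from $L \in \onecequallin$. The bulk of the work will therefore lie in choosing the witness $L$ with enough symmetric counting/matching structure that both a $\onecequallin$-gadget and its complementary gadget can be assembled on a linear-time one-tape probabilistic machine; once such an $L$ is fixed, the proposition follows by combining the \cite{TYL04} lower bound with the two 1PTM constructions.
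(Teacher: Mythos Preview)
Your containment argument is correct and essentially what the paper has in mind (the paper only writes ``since $\reg/n\subseteq \onecequallin/lin$'' and relies on closure of $\reg/n$ under complement for the $\co$-side).

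For strictness you are right to worry: a witness in the \emph{intersection} $\onecequallin/lin\cap\co\onecequallin/lin$ is needed, and $\onecequallin$ is provably not closed under complement (the paper itself shows $Dup\in\onecequallin/lin$ but $\overline{Dup}\notin\onecequallin/lin$). The paper's one-sentence justification is in fact looser than yours: it only says that from $\cfl\cap\onecequallin\nsubseteq\reg/n$ one gets $\reg/n\subsetneqq\onecequallin/lin$ ``(as well as $\co\onecequallin/lin$)''---i.e.\ two separate proper inclusions, obtained from $L$ and $\overline{L}$ respectively---and does not explicitly place a single language in the intersection. So your plan to pin down a concrete $L$ with $L,\overline{L}\in\onecequallin$ (or $\onecequallin/lin$) is exactly the missing step, and is more careful than what the paper writes.

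The weak point in your plan is the ``dual gadget.'' Prepending an extra fair coin toss to a $\onecequallin$-machine $M$ for $L$ produces an acceptance probability that is an \emph{affine} function of $p_M(x)$; any such function equals $1/2$ on a single value of $p_M(x)$, never on ``all values except $1/2$.'' So a generic coin-toss shift cannot turn a $\onecequallin$-machine for $L$ into one for $\overline{L}$---if it could, $\onecequallin$ would be closed under complement, contradicting Theorem~\ref{cequallin-vs-plin}. What you actually need is a language whose \emph{defining predicate} and its negation are both expressible as ``two separately 1PTM-computable quantities are equal''; then you build two independent $\onecequallin$-machines from scratch, not one from the other. Your ``single equality between two linearly computable counts'' candidates are on the right track for $L\in\onecequallin$, but the negation ``count$_1\neq$ count$_2$'' is \emph{not} of that form, so you will have to choose $L$ more carefully (or simply look up the explicit witness used in \cite{TYL04} and check that it, or its complement, already sits in $\onecequallin\cap\co\onecequallin$). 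Until a concrete $L$ with both memberships is exhibited, the strictness argument is incomplete.
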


The above proposition indicates that, even in the presence of advice, probabilistic computation is much more powerful than deterministic computation. Naturally, we can question how powerful the families $\onecequallin$ 
and $\oneplin$ are when deterministic advice is allowed.  
We shall provide several answers to this question. 
Our first answer is the following.  

\begin{theorem}\label{cequallin-vs-plin}
$\onecequallin/lin\neq \co\onecequallin/lin \neq \oneplin/lin$.
\end{theorem}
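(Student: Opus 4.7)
The plan is to reduce both separations in the theorem to a single underlying separation, then exhibit a witness. First, I would establish the closure property $\oneplin/lin = \co\oneplin/lin$. Since a linear-time 1PTM with rational binary coins running in time $cn+d$ has acceptance probability of the form $k/2^{cn+d}$ on inputs of length $n$, any membership margin from $1/2$ is automatically at least $2^{-(cn+d)}$. A standard Gill-style pad-and-flip construction (toss one extra coin, accept deterministically on one branch, run the original machine with the output flipped on the other branch, with the coin bias chosen within the guaranteed gap) then produces a 1PTM for the complement, while at most doubling the alphabet and preserving the linear advice. This complement closure should transfer verbatim from the advice-free case of \cite{TYL04} to the advised case.

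Second, I would argue that the two inequalities in the theorem collapse to one. Suppose hypothetically $\co\onecequallin/lin = \oneplin/lin$. Then complement closure of $\oneplin/lin$ would force $\onecequallin/lin = \co\,\co\onecequallin/lin = \co\oneplin/lin = \oneplin/lin = \co\onecequallin/lin$, i.e., $\onecequallin/lin = \co\onecequallin/lin$. Contrapositively, as soon as we establish $\onecequallin/lin \ne \co\onecequallin/lin$, the second separation $\co\onecequallin/lin \ne \oneplin/lin$ follows for free. So the whole theorem reduces to exhibiting a single language $L$ that lies in one of $\onecequallin/lin$, $\co\onecequallin/lin$ but not the other.

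Third, I would construct such a witness $L$ in $\onecequallin/lin\setminus\co\onecequallin/lin$. The guiding idea is the structural asymmetry between the two conditions: for any linear-time 1PTM $M$ with linear advice $h$, the acceptance probability $p_n(x) = \Pr[M(\track{x}{h(n)})\text{ accepts}]$ takes only finitely many rational values on $\Sigma^n$, and the preimage $p_n^{-1}(1/2)$ is a very rigid "codimension-one" subset, while its complement $p_n^{-1}(\mathbb{Q}\setminus\{1/2\})$ is a much richer set. The plan is to design $L$ so that on each length $n$, $L\cap\Sigma^n$ is naturally a single-value slice (making $L\in\onecequallin/lin$ by a direct 1PTM construction using rational coin weights and the advice as a lookup table), while the complement $\Sigma^n\setminus L$ is combinatorially too rich to be realized as $p_n^{-1}(1/2)$ for any 1PTM with linear advice. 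One concrete candidate is a carefully chosen sparse language built by diagonalizing against the countable list of pairs (1PTM, advice-function-type) for $\co\onecequallin/lin$; for each such pair a single input per length is selected to disagree, while the remaining inputs are filled in to keep $L$ realizable by the exact-$1/2$ condition.

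The hard part will be the non-membership half of step three: proving that no 1PTM $M'$ with linear advice $h'$ can satisfy $p'_n(x)\ne 1/2 \iff x\in L$. The natural tool is a rank/fooling-set argument on the acceptance-probability function: after fixing the advice $h'(n)$ and using the constant number of inner states, $p'_n$ factors through a bounded-rank stochastic matrix product (as in the 1pfa formalism recalled in Section 2), so the family of subsets of $\Sigma^n$ expressible as $\{x : p'_n(x)\ne 1/2\}$ has bounded "complexity" per length. Matching this bound against the diagonalized combinatorial richness of $L$ yields the contradiction. Making the diagonalization and the rank bound line up—in particular, ensuring that the witness language $L$ constructed to defeat all $\co\onecequallin/lin$ candidates is itself still in $\onecequallin/lin$—is the principal technical hurdle.
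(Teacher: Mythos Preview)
Your first two steps are exactly what the paper does: it invokes closure of $\oneplin/lin$ under complementation and then observes that $\co\onecequallin/lin = \oneplin/lin$ would force $\onecequallin/lin = \co\onecequallin/lin$, so everything reduces to separating the latter pair. Your intuition in step three about a bounded-rank structure on the acceptance-probability function is also on target; that is precisely the engine behind the paper's key Lemma~\ref{cequallin-lemma}.

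The gap is in how you propose to exploit that rank structure. Diagonalization is the wrong tool here. You speak of ``the countable list of pairs (1PTM, advice-function-type),'' but advice functions are arbitrary maps $\nat\to\Gamma^*$ and are not countable; what is finite is only the set of possible advice \emph{strings} at each fixed length for each fixed machine, and that set grows with $n$. More seriously, even if you set up a per-length diagonalization against all (machine, advice-string) pairs up to some stage, you yourself flag the real obstacle: there is no mechanism offered for keeping the resulting language inside $\onecequallin/lin$. Membership in $\onecequallin/lin$ demands a \emph{single} 1PTM with a \emph{single} advice function achieving exact probability $1/2$ on members, and a language produced by an uncontrolled diagonal construction has no reason to admit such a uniform description. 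You identify this as ``the principal technical hurdle'' but give no plan to clear it; as written, the argument does not close.

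The paper avoids this entirely by turning the rank observation into a direct combinatorial criterion (Lemma~\ref{cequallin-lemma}): for any $A\in\onecequallin/lin$ there is a constant $m$ such that, for every decomposition $\Sigma^n=\Sigma^{n-\ell}\times\Sigma^{\ell}$ and every $z$, a small ``basis'' set $S\subseteq A_{n,z}$ of size at most $m$ already determines membership of $xy$ for all $x\in A_{n,z}$ once it is known for $x\in S$. This comes straight from choosing linearly independent row vectors $\nu_{ini}M_{\cent\tilde w}$ and using that stochastic rows sum to one. With that criterion in hand, no diagonalization is needed: the paper exhibits the explicit language $Dup=\{ww\}$, shows $Dup\in\onecequallin/lin$ by a concrete 1pfa construction, and then checks that $\overline{Dup}$ violates the criterion (take $z=1^{n/2}$, so $A_{n,z}=\Sigma^{n/2}\setminus\{z\}$; any $m$-element $S$ misses some $y\notin S\cup\{z\}$, and the criterion would then force $yy\in\overline{Dup}$, a contradiction). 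Replacing your diagonalization with this explicit witness-plus-criterion route is what makes the proof go through.
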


To prove this theorem, we need a key lemma, which gives a new criterion that every language in $\onecequallin/lin$ must satisfy. Historically, in early 1970s, Dieu \cite{Die71} showed the following criterion (in our terminology): if $L$ is in $\onecequallin$, then there exists a number $m\in\nat^{+}$ such that, for any $u,y,v\in\Sigma^*$, $\{uy^iv\mid i\in[0,m-1]_{\integer}\}\subseteq L$ implies 
$\{uy^iv\mid i\in\nat\}\subseteq L$. Unfortunately, his criterion cannot be extended to our advised language family $\onecequallin/lin$, because advice strings may change as input size increases. 
To prove our theorem, we should seek another criterion for $\onecequallin/lin$. The next lemma provides one such criterion. 

\begin{lemma}\label{cequallin-lemma}
Let $A\in\onecequallin/lin$ be any language over an alphabet $\Sigma$. There exists a positive integer $m$ that satisfies the following statement. Let $n,\ell\in\nat$ and $z\in\Sigma^*$ satisfy that $n\geq 1$, $\ell\leq n-1$, and $|z|=\ell$. Let $A_{n,z}=\{w\in\Sigma^{n-\ell}\mid wz\in A\}$. There exists a subset $S\subseteq A_{n,z}$ with $|S|\leq m$ such that, for each string  $y\in\Sigma^{\ell}$, if $\{wy\mid w\in S\}\subseteq A$ then $\{xy\mid x\in A_{n,z}\}\subseteq A$.  
\end{lemma}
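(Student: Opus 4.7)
The plan is to (i) convert the given $\onecequallin/lin$-machine for $A$ into a one-way probabilistic finite automaton reading both the input and an advice track of length exactly $n$ with acceptance cut-point $1/2$---exactly as the deterministic counterpart $\onedlin/lin=\reg/n$ of Lemma~\ref{onedlin-vs-reg} is obtained in \cite{TYL04} via Hennie's bounded-crossing theorem for strong linear-time 1TMs---and then (ii) extract $S$ from the affine geometry of the resulting probability vectors. Write the 1pfa as $(Q,\Sigma',\nu_{ini},\{M_{\sigma}\}_{\sigma\in\check{\Sigma}'},F)$, let $d=|Q|$, and declare $m=d+1$; crucially $m$ depends only on $A$ and not on $n,\ell,z$.

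Fix $n\geq 1$, $\ell\leq n-1$, and $z\in\Sigma^{\ell}$, and split the advice as $h(n)=uv$ with $|u|=n-\ell$ and $|v|=\ell$. For each $w\in\Sigma^{n-\ell}$ and each $y\in\Sigma^{\ell}$, introduce
\[
\psi_{w}\;=\;\nu_{ini}\,M_{\cent}\,M_{\track{w}{u}}\;\in\;\mathbb{Q}^{d},\qquad
\eta_{y}\;=\;M_{\track{y}{v}}\,M_{\dollar}\,\xi_{F}^{T}\;\in\;\mathbb{Q}^{d}.
\]
The acceptance probability on the input $\track{wy}{h(n)}$ is then the scalar $\psi_{w}\eta_{y}$, so $wy\in A$ is equivalent to the affine equation $\psi_{w}\eta_{y}=1/2$, and in particular $A_{n,z}=\{w\in\Sigma^{n-\ell}:\psi_{w}\eta_{z}=1/2\}$. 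The point is that the ``future behaviour'' of the machine past the boundary between positions $n-\ell$ and $n-\ell+1$ is controlled by the single vector $\psi_{w}$ lying in a fixed finite-dimensional space.

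If $A_{n,z}=\setempty$, pick $S=\setempty$ and both sides of the implication are vacuous. Otherwise let $V=\{\psi_{w}:w\in A_{n,z}\}\subseteq\mathbb{Q}^{d}$ and choose $S\subseteq A_{n,z}$ so that $\{\psi_{w}:w\in S\}$ is an affine basis of the affine hull of $V$; since $\mathbb{Q}^{d}$ has affine dimension $d$, this forces $|S|\leq d+1=m$. For any $y\in\Sigma^{\ell}$ with $\{wy:w\in S\}\subseteq A$, the affine functional $\psi\mapsto\psi\eta_{y}-1/2$ vanishes on the generators $\{\psi_{w}:w\in S\}$, hence on their affine hull, hence on all of $V$, which delivers $xy\in A$ for every $x\in A_{n,z}$, as required. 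The main obstacle is really step~(i), i.e.\ lifting the Hennie-style 1TM-to-finite-automaton compression underlying Lemma~\ref{onedlin-vs-reg} from 1DTMs to 1PTMs while preserving exact cut-point $1/2$; once that normal form is in hand, step~(ii) is the cheap affine-span computation above, and essentially the same rank idea should also supply an analogous ``basis lemma'' for $\oneplin/lin$ which is what one would want for the separation in Theorem~\ref{cequallin-vs-plin}.
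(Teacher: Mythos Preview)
Your argument is correct and matches the paper's approach: reduce to a 1pfa with advice (the paper isolates this as a separate lemma and, like you, defers its proof to the folding-machine technique of \cite{TYL04}), then choose $S$ so that the prefix vectors $\{\psi_w:w\in S\}$ span those coming from $A_{n,z}$, and propagate the equality $\psi\eta_y=1/2$ through that span.

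The one difference worth noting is that the paper selects a \emph{linear} basis (giving $m=|Q|$) and then must prove, as a separate claim, that the coefficients expressing any $\psi_x$ in that basis sum to $1$; this is where stochasticity of the vectors $\psi_w$ enters, since each has entries summing to $1$. Your choice of an \emph{affine} basis bypasses that claim entirely: the functional $\psi\mapsto\psi\eta_y-1/2$ is affine, so vanishing on an affine generating set suffices. You trade the extra claim for the slightly larger constant $m=|Q|+1$, which is immaterial for the application.
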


Before proving  Lemma \ref{cequallin-lemma}, we shall present the proof 
of Theorem \ref{cequallin-vs-plin}. This proof exemplifies usefulness of the criterion given in the lemma.

\begin{proofof}{Theorem \ref{cequallin-vs-plin}}
Let $\Sigma=\{0,1\}$ and consider the language $Dup=\{ww\mid w\in\Sigma^*\}$ ({\em duplicated strings}). Later, in the proof of Proposition \ref{reg-cequal-cfl}, we shall prove that $Dup\in\onecequallin/lin$. It is therefore sufficient to show below that  
$\overline{Dup}\not\in \onecequallin/lin$. 

For simplicity, we write $A$ for $\overline{Dup}$. Now, we want to show that $A\not\in\onecequallin/lin$. To lead to a contradiction, we assume  that $A\in\onecequallin/lin$. By Lemma \ref{cequallin-lemma}, there is a positive integer $m$ that satisfies the conclusion of the lemma. Now, we choose the minimal even integer $n$ such that $2^{n/2}-1>m+1$. 
Define $z=1^{n/2}$. Clearly, the set $A_{n,z}$ ($= \{w\in\Sigma^{n/2}\mid wz\in A\}$) satisfies that $A_{n,z} = \Sigma^{n/2} \setminus \{z\}$. 
There exists a set $S\subseteq A_{n,z}$ with $|S|\leq m$ that satisfies the lemma. Take any string $y$ in $\Sigma^{n/2}\setminus (S\cup\{z\})$. Note that $y\in A_{n,z}$ because  $A_{n,z} = \Sigma^{n/2}\setminus \{z\}$ and $y\neq z$. 
Since $wy\in A$ for any string $w\in S$, the lemma implies that $yy\in A$. This is a contradiction against the fact that $yy\not\in \overline{Dup} = A$.  
Thus, we conclude that $A\not\in\onecequallin/lin$. As a result, we obtain the desired separation: $\onecequallin/lin \neq \co\onecequallin/lin$. 

Next, we assume that $\co\onecequallin/lin = \oneplin/lin$. Similar to $\oneplin$, the advised family $\oneplin/lin$ is closed under complementation; that is, $\oneplin/lin = \co\oneplin/lin$. This closure property implies that $\onecequallin/lin$ is also closed under complementation; however, this contradicts the above separation result between $\onecequallin/lin$ and $\co\onecequallin/lin$.  It therefore 
follows that $\co\onecequallin/lin \neq \oneplin/lin$, as requested.  
\end{proofof}

We shall present the proof of Lemma \ref{cequallin-lemma}. For this proof, 
we shall use the following characterization of $\onecequallin/lin$, described in terms of 1pfa's and advice functions. This characterization is in essence analogous to Lemma \ref{onedlin-vs-reg}. Since it can be shown by an argument similar to the proof of Claim \ref{reduce-1PTM-1pfa} (which uses a notion of ``folding machine'' in \cite{TYL04}), we omit its proof for readability. 

\begin{lemma}\label{1C=LIN/lin-character}
For any language $A$ over an alphabet $\Sigma$, $A\in\onecequallin/lin$ iff there exist a 1pfa  $M$ and an advice function $h$ that satisfy the following: for every string $x\in\Sigma^*$, $x\in A$ iff 
$M$ accepts $\track{x}{h(|x|)}$ with probability exactly $1/2$. 
\end{lemma}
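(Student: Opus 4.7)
The plan is to establish the claimed equivalence in both directions. The backward direction is immediate: a 1pfa halts on any input $x$ after exactly $|x|+2$ transitions (reading $\cent$, then $x$ symbol by symbol, then $\dollar$), so a 1pfa supplied with an advice function is a restricted form of a linear-time 1PTM with advice, and the acceptance criterion ``with probability exactly $1/2$'' transfers verbatim. Hence any language recognised by a 1pfa with advice at exact probability $1/2$ belongs to $\onecequallin/lin$.

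The forward direction is the substantive one. Given a linear-time 1PTM $M'$ of time bound at most $cn$ with advice function $h$ witnessing $A\in\onecequallin/lin$, I would mimic the ``folding machine'' construction of \cite{TYL04} (the same technique underlying the proof of Claim \ref{reduce-1PTM-1pfa}), adapted to probabilistic transitions. The idea is to summarise the two-way probabilistic behaviour of $M'$ at each individual tape cell by a stochastic ``column operator'' that records, for every distribution over (inner state, head-direction) pairs entering the cell's left boundary, the induced distribution over (inner state, head-direction) pairs exiting its right boundary, together with the partial acceptance/rejection probability contributed inside that cell. Composing these column operators left to right in a single one-way pass yields a 1pfa $M$ whose overall acceptance probability on $\track{x}{h'(|x|)}$ equals that of $M'$ on $\track{x}{h(|x|)}$. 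The new advice $h'(n)$ packs $h(n)$ on one track together with, on another track, per-cell indices pointing to whichever column operator is needed at each position that $M$ cannot reconstruct on the fly. Crucially, because stochastic matrices multiply exactly over the rationals, the critical probability $1/2$ is preserved on the nose rather than merely approximately.

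The main obstacle is proving that only finitely many distinct column operators can arise across all values of $n$, so that $M$ has a state space and an advice alphabet independent of $n$. This requires a crossing-sequence argument tailored to the linear-time regime: the linear bound on total running time, combined with a pumping-style analysis of over-long crossing sequences at any single tape boundary, must yield a uniform constant bound on the length of crossing sequences used by $M'$, and hence a uniform bound on the dimension of each column operator. Once that uniform bound is secured, the finitely many column operators can be encoded as symbols of an enlarged advice alphabet of constant size, a constant-state 1pfa can carry out the one-way composition, and the equivalence of acceptance probabilities closes the reduction.
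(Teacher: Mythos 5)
Your backward direction is fine, and your overall strategy for the forward direction (compress the two-way linear-time probabilistic behaviour into a one-way pass, packing any needed per-position data into an enlarged advice track) is the same route the paper takes, namely the ``folding machine'' of \cite{TYL04} followed by the reduction to a one-way automaton. However, the step you yourself flag as the main obstacle is resolved incorrectly, and this is a genuine gap. You claim that the strong linear-time bound, via a pumping argument, forces a \emph{uniform constant bound on the length of every crossing sequence}. This is false. Under Michel's strong definition the only guarantee is that every computation path has length at most $cn$; a single boundary may still be crossed $\Theta(n)$ times. For instance, a machine that walks to the right end of the input in $n$ steps and then oscillates across the boundary between cells $n-1$ and $n$ for the remaining $(c-1)n$ steps satisfies the strong linear-time definition, yet its crossing sequence at that one boundary has length $\Theta(n)$. (The linear bound only controls the \emph{sum} of crossing-sequence lengths over all boundaries, which is the weaker, Hennie-style average bound.) Consequently your ``column operators,'' if indexed by crossing sequences, do not have dimension bounded independently of $n$, and the finiteness of the advice alphabet and state set does not follow.

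The way around this, which is what \cite{TYL04} actually does and what the paper's omitted proof appeals to, is not to bound the number of crossings but to work with transfer matrices of \emph{fixed} dimension $O(|Q|)$ whose entries are the total probabilities of net transfer across a boundary, obtained by summing over all (unboundedly many) crossing patterns -- in effect a Neumann series or matrix inversion. The linear time bound is used only to confine the head to $O(n)$ cells so the tape can be folded into the $n$-cell input area with $O(1)$ tracks; the resulting machine is then converted by [10, Lemma 6.5] into a one-way \emph{generalized} probabilistic finite automaton (matrices no longer stochastic), and finally into a genuine 1pfa. If you adopt that route you must also check the last conversion explicitly: the generalized-to-stochastic translation preserves acceptance probabilities only up to an affine map $p \mapsto ap+b$, so the ``exactly $1/2$'' cut must be re-normalized rather than, as you write, preserved ``on the nose.'' That is harmless for a $\mathrm{C}_{=}$-type acceptance condition, but it needs to be stated and verified.
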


The following proof of Lemma \ref{cequallin-lemma} exploits a fundamental property of a stochastic matrix, which states that each of its rows sums up to $1$.  

\begin{proofof}{Lemma \ref{cequallin-lemma}}
Let $A\in\onecequallin/lin$ be any language over an alphabet $\Sigma$. 
Lemma \ref{1C=LIN/lin-character} guarantees the existence of a 1pfa  $M=(Q,\Lambda,\nu_{ini},\{M_{\sigma}\}_{\sigma\in\check{\Lambda}},F)$, an advice alphabet $\Gamma$, and an advice function $h:\nat\rightarrow\Gamma^*$ such that, 
for every string $x\in\Sigma^*$, $x\in A$ iff  $\prob_{M}[M(\track{x}{h(|x|)}) = 1] = 1/2$, where $\Lambda = \{\track{\sigma}{\tau}\mid \sigma\in\Sigma,\tau\in\Gamma\}$.
Recall from Section \ref{sec:basic} our assumption on $\nu_{ini}$ and $\xi_{F}$. 

We set $m= |Q|$ and choose $n,\ell\in\nat$ and $z\in\Sigma^{\ell}$ 
arbitrarily. Since the lemma trivially holds for $\ell=0$,  let us assume 
that $\ell>0$. 
For simplicity, let $h(n) =rs$ for two strings $r\in \Sigma^{n-\ell}$ and $s\in \Sigma^{\ell}$. 
Now, we focus our attention to the set $A_{n,z}=\{w\in\Sigma^{n-\ell}\mid wz\in A\}$. Notice that the lemma is also true by setting $S=A_{n,z}$ 
when $|A_{n,z}|\leq|Q|$. It thus suffices to consider the case where $|A_{n,z}|>|Q|$. 

Henceforth, we write $\tilde{w} = \track{w}{r}$, $\tilde{x} = \track{x}{r}$, $\tilde{z}=\track{z}{s}$, and $\tilde{y}=\track{y}{s}$ for notational convenience. Note that the acceptance probability $p_{acc}(wz) =_{def} \nu_{ini}M_{\cent\tilde{w}}M_{\tilde{z}\dollar}\xi_{F}^{T}$ (note that this notation suppresses the advice string $rs$) equals $1/2$ for all strings $wz$ in $A$. 
Choose a maximal subset $S'$ of 
linearly-independent vectors (which form a set of {\em basis vectors}) in the set $V=\{\nu_{ini}M_{\cent\tilde{w}} \mid w\in A_{n,z}\}$.  Clearly, since each vector $\nu_{ini}M_{\cent\tilde{w}}$ has dimension $|Q|$, there are at most $|Q|$ linearly-independent vectors in $V$. This implies that $|S'|\leq |Q|=m$. The desired set  $S$ is now defined as $S=\{w\in A_{n,z} \mid \nu_{ini}M_{\cent\tilde{w}}\in S'\}$. 
Note that any vector $\nu_{ini}M_{\cent\tilde{x}}$ in $V-S'$ can be written as a linear combination of basis vectors in $S'$: 
\[
(*) \hs{10} \nu_{ini}M_{\cent\tilde{x}} = \sum_{w\in S} \alpha_{w}\cdot \nu_{ini}M_{\cent\tilde{w}},
\]
where $\{\alpha_{w}\}_{w\in S}$ is a set of appropriate real numbers. 
As a key claim, we show the following statement.

\begin{claim}\label{sum-up-to-one}
$\sum_{w\in S}\alpha_{w} = 1$.
\end{claim}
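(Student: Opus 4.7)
The plan is to exploit the stochastic nature of the matrices $M_\sigma$ together with the fact that the vectors appearing in equation $(*)$ are all probability distributions over the state set $Q$. Specifically, let $\mathbf{1}\in\real^{|Q|}$ denote the all-ones row vector. Since each $M_\sigma$ is stochastic, every row sums to $1$, which is the identity $M_\sigma \mathbf{1}^T = \mathbf{1}^T$. Products of stochastic matrices are stochastic, so $M_u \mathbf{1}^T = \mathbf{1}^T$ holds for every string $u\in\check{\Lambda}^*$. Combining this with the assumption that $\nu_{ini}$ is a probability distribution (so $\nu_{ini}\mathbf{1}^T = 1$), we obtain the crucial identity
\[
\nu_{ini} M_{\cent \tilde{u}} \mathbf{1}^T \;=\; \nu_{ini}\, \mathbf{1}^T \;=\; 1
\]
for every string $u$, including $u=x$ and each $u=w$ appearing in equation $(*)$.

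The next step is simply to apply the linear functional $v\mapsto v\mathbf{1}^T$ (the ``sum of entries'') to both sides of $(*)$. On the left-hand side we get $\nu_{ini} M_{\cent \tilde{x}} \mathbf{1}^T = 1$ by the identity above. On the right-hand side, by linearity, we get
\[
\sum_{w\in S} \alpha_w \cdot \bigl(\nu_{ini} M_{\cent \tilde{w}} \mathbf{1}^T\bigr) \;=\; \sum_{w\in S}\alpha_w \cdot 1 \;=\; \sum_{w\in S}\alpha_w.
\]
Equating the two sides yields $\sum_{w\in S}\alpha_w = 1$, as claimed.

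There is no real obstacle here; the only thing to verify is that the ``sum of entries'' functional can be legitimately applied to a real linear combination (which it can, as it is just a linear map $\real^{|Q|}\to\real$) and that the coefficients $\alpha_w$ are real numbers (which is given in the setup of the basis expansion). The argument does not even require the coefficients $\alpha_w$ to be nonnegative or rational; it only uses stochasticity of the matrices $M_\sigma$ and the fact that $\nu_{ini}$ has unit $\ell_1$-norm. This same ``preservation of total probability mass'' principle is what justifies the whole approach, and it is the reason the characterization via 1pfa's (Lemma~\ref{1C=LIN/lin-character}) was invoked in the first place.
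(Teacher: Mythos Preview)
Your proof is correct and is essentially the same argument as the paper's: both exploit that $\nu_{ini}M_{\cent\tilde{u}}$ is a probability vector (entries summing to $1$) for every $u$, and then take the sum of entries on both sides of $(*)$. The only difference is cosmetic---you phrase the ``sum of entries'' step via the linear functional $v\mapsto v\mathbf{1}^T$, whereas the paper writes it out coordinate-wise.
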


\begin{proof}
For our convenience, let $\nu_{ini}M_{\cent\tilde{x}} = (a_{x,i})_{i\in[m]}$ and  let $\nu_{ini}M_{\cent\tilde{w}} = (a_{w,i})_{i\in[m]}$ for each element $w\in S$. 
{}From Equation (*), we obtain 
$
(a_{x,i})_{i\in[m]} = 
\left(\sum_{w\in S}\alpha_{w}a_{w,i}\right)_{i\in[m]}. 
$
For every string $w\in S$, since $M_{\cent\tilde{w}}$ is a stochastic matrix, it holds that  $\sum_{i=1}^{m}a_{w,i}=1$. Likewise, we have $\sum_{i=1}^{m}a_{x,i}=1$. {}From these equations, we obtain  $\sum_{i=1}^{m}\left(\sum_{w\in S}\alpha_{w}a_{w,i}\right) = 1$, which further implies  
\[
1  = \sum_{i=1}^{m}\left(\sum_{w\in S}\alpha_{w}a_{w,i}\right) 
= \sum_{w\in S}\alpha_{w}\left(\sum_{i=1}^{m}a_{w,i}\right) =  \sum_{w\in S}\alpha_{w}.
\]
This finishes the proof of the claim.
\end{proof}

At last, we shall show the desired property of $S$: for every string  $y\in\Sigma^{\ell}$, $\{wy\mid w\in S\}\subseteq A$ implies $\{xy\mid x\in A_{n,z}\}\subseteq A$. To show this property, let $y$ be any string in $\Sigma^{\ell}$ and assume that $wy\in A$ for all strings $w$ in $S$. {}From Claim~\ref{sum-up-to-one}, since $p_{acc}(wy)=1/2$, it follows by Equation (*) that, for each string $x\in A_{n,z}$, 
\begin{eqnarray*}
p_{acc}(xy) &=& \nu_{ini}M_{\cent\tilde{x}}M_{\tilde{y}\dollar}\xi_{F}^{T} 
\;\;=\;\; \left(\sum_{w\in S} \alpha_{w}\cdot \nu_{ini}M_{\cent\tilde{w}}\right)M_{\tilde{y}\dollar}\xi_{F}^{T} \\
&=& \sum_{w\in S} \alpha_{w} \left(\nu_{ini}M_{\cent\tilde{w}} M_{\tilde{y}\dollar}\xi_{F}^{T} \right) 
\;\;=\;\; \sum_{w\in S} \alpha_{w} \cdot p_{acc}(wy) \\
&=& \frac{1}{2} \sum_{w\in S}\alpha_{w} 
\;\;=\;\; \frac{1}{2}. 
\end{eqnarray*}
Hence, we obtain $xy\in A$. Because $x$ is arbitrary in $A_{n,z}$, 
we finally obtain the desired property and therefore the lemma. 
\end{proofof}

The advised language family $\cfl/n$ was introduced in \cite{Yam08a}, analogous to $\reg/n$, using {\em one-way nondeterministic pushdown automata}\footnote{Roughly speaking, a 1npda is a one-way nondeterministic automaton equipped with a {\em stack}, which is an additional read/write tape 
whose access is regulated by the first-in, last-out policy.} 
(or 1npda's, in short) together with deterministic advice 
whose length equals input size. 
A class separation between $\cfl/n$ and $\onecequallin/lin$ is also 
possible. 

\begin{proposition}
\label{cequallin-vs-cfln}
$\onecequallin\nsubseteq \cfl/n$. Thus, $\onecequallin/lin \nsubseteq \cfl/n$. 
\end{proposition}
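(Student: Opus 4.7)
The approach is to exhibit a specific witness language $L$ lying in $\onecequallin$ but not in $\cfl/n$; the second sentence of the proposition then follows from the inclusion $\onecequallin \subseteq \onecequallin/lin$. In the spirit of the duplication language $Dup$ already used in this paper, I plan to take $L_{\mathrm{eq}} = \{w \# w : w \in \{0,1\}^*\}$ as the witness.

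For $L_{\mathrm{eq}} \in \onecequallin$, I build a linear-time 1PTM $M$ accepting with probability exactly $1/2$ iff the input is of the form $w \# w$. The construction adapts the exact-$1/2$ trick underlying Proposition \ref{reg-cequal-cfl} for $Dup$; the explicit $\#$ marker lets the machine locate the midpoint on its own, obviating the advice needed in the advised case. Concretely, $M$ first checks that the input has the form $u \# v$, flips one global fair coin, accepts on heads, and on tails performs a constant number of linear head sweeps that probabilistically sample an index $i \in [1,|u|]$ with some positive probability $p_i>0$, locate the mirror cell in $v$ using a tape-recorded counter, and accept iff the two sampled symbols differ. A direct calculation gives overall acceptance probability $(1/2) + (1/2)\sum_{i : u_i \neq v_i} p_i$, which equals $1/2$ precisely when $u = v$. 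The total running time is $O(n)$ by construction.

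For $L_{\mathrm{eq}} \notin \cfl/n$, I apply the swapping-lemma-style argument for $\cfl/n$ from \cite{Yam08a}. At input length $2n+1$, the language $L_{\mathrm{eq}}$ contains $2^n$ distinct strings $w\#w$, corresponding to $2^n$ pairwise-distinct prefixes $w\#$ each of which admits exactly one accepting continuation. The swapping lemma, applied to a hypothetical $\cfl/n$-recognizer and correctly accounting for the $n$-symbol advice together with the constant-state, bounded-width pushdown, bounds the number of such distinguishable ``suffix behaviors'' by a polynomial in $n$, contradicting $2^n$ for sufficiently large $n$.

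The main obstacle will be the second step: invoking the correct form of the swapping lemma for $\cfl/n$ and arguing rigorously that the $n$-symbol advice cannot be leveraged to distinguish exponentially many continuations. Once this bound on distinguishable suffix-behaviors is in place, the counting contradiction is direct.
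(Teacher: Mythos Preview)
The paper uses a different witness, $Equal_6 = \{w \in \Sigma_6^* : \#_{a_i}(w) = \#_{a_j}(w) \text{ for all } i,j\}$ over the alphabet $\{a_1,\dots,a_6,\#\}$. There, $Equal_6 \notin \cfl/n$ is quoted directly from \cite{Yam08a}, and $Equal_6 \in \onecequallin$ follows by writing $Equal_6 = \bigcap_{i=2}^6 \{w : \#_{a_1}(w) = \#_{a_i}(w)\}$ and invoking closure of $\onecequallin$ under intersection \cite{TYL04}. Both steps become one-liners.

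Your witness $L_{\mathrm{eq}} = \{w\#w\}$ is likely valid, but both halves of your sketch have gaps. For membership in $\onecequallin$, the step ``locate the mirror cell in $v$ using a tape-recorded counter'' is exactly the kind of positional transfer that a one-tape machine cannot do in a constant number of linear sweeps: a unary counter written in the $u$-region cannot be consumed in the $v$-region without $\Theta(n)$ shuttles of length $\Theta(n)$. The claim $L_{\mathrm{eq}} \in \onecequallin$ is salvageable by following the weighted-sum construction of Proposition~\ref{reg-cequal-cfl} verbatim (the advice there only marks the midpoint, which your $\#$ supplies), but that is not the algorithm you describe.

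The second half has a more basic error. You assert that for $\cfl/n$ the swapping lemma ``bounds the number of distinguishable suffix behaviors by a polynomial in $n$.'' That is false: $\{w\#w^R : w \in \{0,1\}^*\}$ lies in $\cfl \subseteq \cfl/n$, and its $2^n$ prefixes $w\#$ each admit exactly one accepting continuation---your counting argument applies to it verbatim. A pushdown store can encode exponentially many distinct configurations, so no Myhill--Nerode-style count can separate $L_{\mathrm{eq}}$ from $\cfl/n$. The swapping lemma of \cite{Yam08a} is an interchange-of-substrings statement, not a suffix-behavior bound; applying it to $L_{\mathrm{eq}}$ would require carrying out the full combinatorial argument (as done for $Dup$ there), which is not the shortcut you outline.
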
 

\begin{proof}
Let  $\Sigma_6=\{a_1,a_2,\ldots,a_6,\#\}$ be our alphabet. Consider the language $Equal_{6}$ composed of all strings $w$ over $\Sigma_6$  such that  $\#_{a}(w)=\#_{a'}(w)$ for any pair $a,a'\in\Sigma_6-\{\#\}$. Since $Equal_{6}$ sits outside of $\cfl/n$ \cite{Yam08a}, it is enough to show that $Equal_{6}$ belongs to  $\onecequallin$. For any two fixed indices $i,j\in[1,6]_{\integer}$, we denote by $L_{i,j}$ the language $\{w\in\Sigma_6^*\mid \#_{a_i}(w)=\#_{a_j}(w)\}$. It is not difficult to show that each language $L_{i,j}$ belongs to $\onecequallin$. Moreover, note that $Equal_{6} = \bigcap_{i=2}^{6}L_{1,i}$. A crucial point is 
that $\onecequallin$ is closed under intersection  \cite{TYL04}.  
This closure property implies that $Equal_{6}$ belongs to $\onecequallin$. 
\end{proof}

As an immediate consequence of Proposition \ref{cequallin-vs-cfln}, we obtain $\oneplin/lin\nsubseteq\cfl/n$. In the following, we shall prove the other direction: $\cfl/n\nsubseteq \oneplin/lin$.  Notice that the non-advice separation $\cfl\nsubseteq\oneplin$ of  Nasu and Honda \cite{NH71} (see also [10, Proposition 6.7]) does not imply our desired 
separation.

\begin{theorem}\label{cfl-vs-oneplin}
$\cfl\nsubseteq\oneplin/lin$. Thus, $\cfl/n\nsubseteq \oneplin/lin$.
\end{theorem}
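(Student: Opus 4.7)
The plan is to adapt the matrix-rank technique used for Lemma~\ref{cequallin-lemma} to the cutpoint-$\frac{1}{2}$ setting of $\oneplin/lin$, the essential new ingredient being a sign-rank lower bound in place of the stochasticity argument used in that earlier lemma.

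I would first establish a 1pfa-based characterization of $\oneplin/lin$ parallel to Lemma~\ref{1C=LIN/lin-character}: a language $A$ over $\Sigma$ belongs to $\oneplin/lin$ iff there exist a 1pfa $M=(Q,\Lambda,\nu_{ini},\{M_\sigma\}_\sigma,F)$ and an advice function $h$ such that, for every $x\in\Sigma^*$, $x\in A$ iff $\nu_{ini}M_{\cent\tilde{x}\dollar}\xi_F^T > \frac{1}{2}$, where $\tilde{x}=\track{x}{h(|x|)}$. The proof is the folding-machine construction underlying Lemma~\ref{1C=LIN/lin-character} with strict inequality in place of exact equality.

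Next, I would take the deterministic context-free language $L_{\mathrm{pal}}=\{w\#w^R\mid w\in\{0,1\}^*\}$ and argue $L_{\mathrm{pal}}\notin\oneplin/lin$. Suppose to the contrary, witnessed by a 1pfa $M$ with $|Q|=k$ states and advice function $h$. For each odd length $n=2m+1$ the advice string $h(n)$ is fixed, and one sets
\[
P^{(n)}_{w,y}\;=\;p_{acc}(w\#y)\;=\;\nu_{ini}M_{\cent\track{w\#y}{h(n)}\dollar}\xi_F^T \qquad(w,y\in\{0,1\}^m).
\]
Splitting the matrix product at the central $\#$ position yields a factorisation $P^{(n)}=VU^T$ with $V,U\in\mathbb{R}^{2^m\times k}$, so $\mathrm{rank}(P^{(n)})\le k$. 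The definition of $L_{\mathrm{pal}}$ forces $P^{(n)}_{w,y}>\frac{1}{2}$ iff $y=w^R$; hence $P^{(n)}-\frac{1}{2}J$ (with $J$ the all-ones matrix) has rank at most $k+1$ and, after the column permutation induced by reversal, realises the sign pattern of the $2^m\times 2^m$ identity matrix.

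The last step is to invoke a sign-rank lower bound (in the style of Paturi--Simon or Alon--Frankl--R\"odl) saying that any real matrix realising the identity sign pattern on $N\times N$ positions has rank $\omega(1)$ in $N$ (in fact $\Omega(\log N)$). Applied with $N=2^m$, this forces $k+1$ to grow without bound in $m$, contradicting $k$ being a fixed constant. The main obstacle is precisely this sign-rank step: unlike in Lemma~\ref{cequallin-lemma}, where stochasticity forced the coefficients of any linear dependency among prefix state vectors to sum to $1$ and thereby preserved the exact acceptance value $\frac{1}{2}$ under linear combinations, the cutpoint here is strict and arbitrary real (possibly negative) coefficients appear, so no stochasticity shortcut is available and a genuine sign-rank argument seems unavoidable. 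Should invoking the identity sign-rank bound prove inconvenient, one could instead use an inner-product-style CFL such as $\{x\#y\mid \sum_i x_iy_{|y|+1-i}\equiv 1\pmod 2\}$ (recognised by a DPDA that pushes $x$ and then XOR-accumulates products as it pops against $y$), whose sign pattern has sign-rank $\Omega(\sqrt{N})$ by Forster's theorem, which more than suffices for the contradiction.
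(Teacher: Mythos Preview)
Your primary line of attack breaks at the final step: the sign-rank of the $N\times N$ identity pattern (positive on the diagonal, negative elsewhere) is not $\omega(1)$ but exactly~$3$. An explicit rank-$3$ realisation is $M_{ij}=\epsilon-(a_i-a_j)^2$ for distinct reals $a_1,\dots,a_N$ and $0<\epsilon<\min_{i\neq j}(a_i-a_j)^2$; expanding the square exhibits $M$ as a sum of three rank-one terms. Hence bounding $\mathrm{rank}\bigl(P^{(n)}-\tfrac12 J\bigr)\le k+1$ places no growing constraint on $k$, and no contradiction arises from $L_{\mathrm{pal}}$. Neither Paturi--Simon nor Alon--Frankl--R\"odl supplies the bound you assert; equality is in fact the textbook example of a function with \emph{constant} unbounded-error communication complexity, precisely because of this low sign-rank. (You may be conflating sign-rank with bounded-error approximate rank, where equality is indeed hard --- but that notion is irrelevant to the strict-cutpoint class $\oneplin/lin$.)

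Your fallback, by contrast, is correct and is essentially the paper's route. The paper also uses an inner-product CFL, namely $IP_{*}=\{axy\mid a\in\{\lambda,0,1\},\,|x|=|y|,\,x^R\odot y\equiv0\pmod 2\}$, but rather than citing Forster it extracts a self-contained criterion (Lemma~\ref{oneplin-criteria}): writing an arbitrary prefix state vector as $\sum_i\alpha_{w_i}\,\nu_{ini}M_{\cent\tilde w_i}$ over a basis of at most $|Q|$ prefix vectors, stochasticity forces $\sum_i\alpha_{w_i}=1$, whence $p_{acc}(xy)-\tfrac12=\sum_i\alpha_{w_i}\bigl(p_{acc}(w_iy)-\tfrac12\bigr)$; if all $2^{|Q|}$ sign patterns occur among the tuples $(A(w_1y),\dots,A(w_{|Q|}y))$ as $y$ ranges, one picks $y$ matching the signs of the $\alpha_{w_i}$ and $y'$ matching their negations, forcing $xy\in A$ and $xy'\notin A$ for \emph{every} $x$. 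For $IP_*$ all $2^{|Q|}$ patterns are realisable while an $x$ with $xy\in IP_*$ for every such $y$ also exists, giving the contradiction. Your Forster-based version reaches the same endpoint via a cleaner black box; the paper's argument is more elementary and stays within the paper's own toolkit.
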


This theorem follows from the next lemma, 
which gives a new criterion for languages in $\oneplin/lin$. 
This lemma sharply contrasts with Lemma \ref{cequallin-lemma}.

\begin{lemma}\label{oneplin-criteria}
Let $A\in\oneplin/lin$ over an alphabet $\Sigma$. There exists a positive constant $m$ that satisfies the following statement. Let $n,\ell\in\nat$ be arbitrary numbers with $n\geq1$ and $\ell\leq n-1$. There exists a set $S=\{w_1,w_2,\ldots,w_m\}\subseteq \Sigma^{n-\ell}$ with $|S|=m$ for which the following implication holds: for any set $T\subseteq \Sigma^{\ell}$, if   $|\{A(w_1y)A(w_2y)\cdots A(w_my)\mid y\in T\}|\geq 2^{m}$, then it follows that, for any string $x\in\Sigma^{n-\ell}$, there exists a pair $y,y'\in T$ such that $xy\in A$ and $xy'\not\in A$. 
\end{lemma}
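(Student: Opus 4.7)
The plan is to parallel the proof of Lemma~\ref{cequallin-lemma} by first establishing a 1pfa characterization of $\oneplin/lin$ (the analogue of Lemma~\ref{1C=LIN/lin-character}, provable by the same folding-machine argument referenced via Claim~\ref{reduce-1PTM-1pfa}): there exist a 1pfa $M=(Q,\Lambda,\nu_{ini},\{M_\sigma\}_{\sigma\in\check{\Lambda}},F)$ and an advice function $h$ such that, for every $x\in\Sigma^*$, $x\in A$ iff $\nu_{ini}M_{\cent\tilde{x}\dollar}\xi_{F}^{T} > 1/2$, where $\tilde{x}=\track{x}{h(|x|)}$. I then declare the required constant to be $m:=|Q|$.

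For any $n\geq 1$ and $\ell\leq n-1$, split $h(n)=rs$ with $|r|=n-\ell$, and write $\tilde{w}=\track{w}{r}$ for $w\in\Sigma^{n-\ell}$ and $\tilde{y}=\track{y}{s}$ for $y\in\Sigma^{\ell}$. Define the state-distribution vector $\vec{v}(w):=\nu_{ini}M_{\cent\tilde{w}}$ and the conditional acceptance vector $\vec{u}(y):=M_{\tilde{y}\dollar}\xi_{F}^{T}$, so that $p_{acc}(wy)=\vec{v}(w)\cdot\vec{u}(y)$. Let $V:=\mathrm{span}\{\vec{v}(w)\mid w\in\Sigma^{n-\ell}\}\subseteq\mathbb{R}^{m}$. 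Since $\dim V\leq m$, I can pick $w_1,\dots,w_m\in\Sigma^{n-\ell}$ whose images span $V$ (padding with arbitrary extra strings if $\dim V<m$; the edge case $|\Sigma|^{n-\ell}<m$ involves only finitely many small inputs and can be absorbed by enlarging $m$). Put $S=\{w_1,\ldots,w_m\}$. For each $x\in\Sigma^{n-\ell}$, write $\vec{v}(x)=\sum_{i=1}^{m}\alpha_i\vec{v}(w_i)$ for some (not necessarily unique) reals $\alpha_i$; summing coordinates and using that $\vec{v}(x)$ and each $\vec{v}(w_i)$ are probability distributions yields $\sum_i\alpha_i=1$, so in particular some $\alpha_i$ is strictly positive.

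The heart of the proof is a sign-pattern selection. Set $\beta_i(y):=p_{acc}(w_iy)-1/2$, so that $A(w_iy)=1$ iff $\beta_i(y)>0$, and using $\vec{v}(w_i)\cdot\mathbf{1}=1$, linearity yields
\[
p_{acc}(xy)-\frac{1}{2}\;=\;\sum_{i=1}^{m}\alpha_i\beta_i(y).
\]
Assume $T$ realizes all $2^m$ patterns $(A(w_1y),\ldots,A(w_my))$. I select $y_{+}\in T$ whose pattern has a $1$ in coordinate $i$ exactly when $\alpha_i>0$: then $\alpha_i\beta_i(y_{+})>0$ whenever $\alpha_i>0$, and $\alpha_i\beta_i(y_{+})\geq 0$ in every other case (since $\alpha_i<0$ forces $\beta_i(y_{+})\leq 0$, so the product is nonnegative). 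Since some $\alpha_i>0$, we conclude $p_{acc}(xy_{+})>1/2$, so $xy_{+}\in A$. Symmetrically, select $y_{-}\in T$ whose pattern has a $1$ in coordinate $i$ exactly when $\alpha_i<0$: a direct sign check shows $\alpha_i\beta_i(y_{-})\leq 0$ for every $i$, hence $p_{acc}(xy_{-})\leq 1/2$ and thus $xy_{-}\notin A$. The pair $(y_{+},y_{-})$ supplies the required witnesses.

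The main obstacle is conceptual: the whole scheme hinges on the stochastic identity $\sum_i\alpha_i=1$, which is exactly what guarantees a strictly positive coefficient and so breaks the symmetry between the two chosen patterns. This is precisely what distinguishes the threshold ``$>1/2$'' of $\oneplin/lin$ from the exact ``$=1/2$'' condition of $\onecequallin/lin$, and is the algebraic reason why the present lemma extracts \emph{both} an accepting and a rejecting extension, in sharp contrast to Lemma~\ref{cequallin-lemma}. A minor technical point is to verify the $\oneplin/lin$-analogue of the 1pfa characterization, which should follow by the same folding-machine construction used in the proof of Lemma~\ref{1C=LIN/lin-character}.
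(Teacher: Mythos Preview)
Your proof is correct and follows essentially the same route as the paper's (the 1pfa characterization you invoke is stated there as Lemma~\ref{character-PLIN}): take the linear span of the state-distribution vectors $\nu_{ini}M_{\cent\tilde{w}}$, use the stochastic identity $\sum_i\alpha_i=1$, and pick $y_{\pm}$ according to the sign pattern of the $\alpha_i$. Your treatment is in one respect slightly cleaner---by exploiting the asymmetry between ``$>1/2$'' (membership) and ``$\leq 1/2$'' (non-membership) you need strict positivity only on the $y_{+}$ side and hence avoid the paper's preliminary normalization that $p_{acc}$ is never exactly $1/2$---while your parenthetical about the edge case $|\Sigma|^{\,n-\ell}<m$ is a bit off (enlarging $m$ does not help; the correct fix is that padding $S$ with repeated strings forces repeated coordinates in every pattern, so the $2^m$-pattern hypothesis becomes unattainable and the implication is vacuous).
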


{}From this lemma, Theorem \ref{cfl-vs-oneplin} easily follows. 
In the following proof of the theorem, the notation $\odot$ denotes 
the {\em bitwise  binary inner product}. 

\begin{proofof}{Theorem \ref{cfl-vs-oneplin}}
Let $\Sigma=\{0,1\}$ for simplicity and consider the language $IP_{*}=\{axy\mid a\in\{\lambda,0,1\}, x,y\in\Sigma^*, |x|=|y|,x^{R}\odot y\equiv 0\;(\mathrm{mod}\;2)\}$.
Since $IP_{*}$ is in $\cfl$ \cite{Yam09a}, we want to show that $IP_{*}\not\in \oneplin/lin$.

Now, assuming that $IP_{*}\in\oneplin/lin$, we take a positive constant 
$m$ that satisfies Lemma \ref{oneplin-criteria}. Choose a sufficiently large even number $n$ and let $\ell=n/2$. There exists a subset $S=\{w_1,w_2,\ldots,w_{m}\}$ of $\Sigma^{\ell}$ with $m$ distinct elements that satisfy the lemma. 
To each binary sequence $r=(r_1,r_2,\ldots,r_m)$ (seen as a string) in $\Sigma^m$,  
we assign a certain string $y_{r}\in\Sigma^{n-\ell}$ satisfying that $w_i^{R}\odot y_{r}\equiv r_i \;(\mathrm{mod}\;2)$ for all indices  
$i\in[m]$. 
Note that $y_r \neq y_{r'}$ for any distinct pair $r,r'$. By collecting all such $y_r$'s, we define $T=\{y_{r}\mid r\in\Sigma^m\}$. Clearly, 
$|T|=2^m$ and therefore $|\{IP_{*}(w_1y)\cdots IP_{*}(w_my)\mid y\in T\}| = |T|\geq 2^{m}$.  
Since $n$ is sufficiently 
larger than $2^m$, there is a string $x\in\Sigma^{\ell}$ satisfying $x^{R}\odot y_r\equiv0\;(\mathrm{mod}\;2)$ for every sequence $r$ in $\Sigma^m$. In other words, $xy\in IP_{*}$ holds for any $y\in T$. For this $x$, the lemma yields a pair $y,y'\in T$ for which $xy\in IP_{*}$ and $xy'\not\in IP_{*}$. This is a contradiction against the choice of $x$. Hence, we conclude that $IP_{*}\not\in\oneplin/lin$.  
\end{proofof}

Finally, we shall give the proof of Lemma \ref{oneplin-criteria}. This proof relies on, similar to Lemma \ref{1C=LIN/lin-character}, a new 1pfa-characterization of $\oneplin/lin$.  

\begin{lemma}\label{character-PLIN}
For any language $A$ over an alphabet $\Sigma$, $A\in\oneplin/lin$ iff there exist a 1pfa $M$ and an advice function $h$ satisfying the following: for every input $x\in\Sigma^*$, $x\in A$ iff $M$ accepts $\track{x}{h(n)}$ with probability more than $1/2$.
\end{lemma}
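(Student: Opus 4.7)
The plan is to mirror the two-direction proof of Lemma \ref{1C=LIN/lin-character}, with the only change being that the acceptance criterion is relaxed from ``exactly $1/2$'' to ``strictly greater than $1/2$''; since this criterion is applied only to the final acceptance probability, it passes through the construction without any structural modification. The easy direction ($\Leftarrow$) is immediate: a 1pfa together with advice $h$ is a restricted linear-time 1PTM whose head makes exactly $|x|+2$ moves and uses rational transition probabilities, and the track advice satisfies $|h(n)|=n\leq 1\cdot n+0$, so any language recognized under the threshold-$1/2$ criterion by such a 1pfa automatically lies in $\oneplin/lin$.

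For the hard direction ($\Rightarrow$), take a linear-time 1PTM $N$ with advice $h$ of length at most $cn+d$ witnessing $A\in\oneplin/lin$. First I would repackage $h(n)$ into an advice string $h'(n)$ of length exactly $n$ over an enlarged alphabet $\Gamma'$, by grouping the $O(1)$ symbols of $h(n)$ relevant to each position $i\in[n]$ into one composite symbol (padding with blanks if necessary). This preserves all information carried by the original advice while giving the length-$n$ advice required by the 1pfa model.

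Next I would apply the ``folding machine'' construction of \cite{TYL04}, the same device underpinning the proofs of Lemmas \ref{onedlin-vs-reg} and \ref{1C=LIN/lin-character}. Under the strong linear-time convention, every computation path of $N$ on input of length $n$ has length at most $an+b$, so the total length of crossing sequences summed over all tape cells is $O(n)$. A Hennie-style crossing-sequence compression argument, applied path-wise and hence agnostic to transition probabilities, shows that $N$ may be replaced by an equivalent 1PTM whose crossing sequences at every cell have length bounded by a constant $k$ depending only on $N$. The 1pfa $M$ then takes as its inner states the finitely many pairs of consistent left/right crossing-sequence profiles at a cell boundary, and for each composite symbol $\track{\sigma}{\tau}$ defines a stochastic matrix $M_{\track{\sigma}{\tau}}$ whose $(\alpha,\beta)$-entry equals the probability that $N$'s local activity inside a cell labeled $\track{\sigma}{\tau}$ converts left-profile $\alpha$ into right-profile $\beta$. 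Composing these matrices against $\nu_{ini}$ and $\xi_F^T$ reproduces $\mathrm{Prob}[N\text{ accepts }\track{x}{h(|x|)}]$ exactly, so $M$ accepts $\track{x}{h'(|x|)}$ with probability $>1/2$ iff $x\in A$.

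The main obstacle is the crossing-sequence compression in the probabilistic setting. In the classical deterministic Hennie argument, a unique trajectory yields a single crossing sequence per cell, and the bounded inner-state count caps its length via a loop-removal pumping step. Probabilistically one faces a distribution over trajectories; however, the strong linear-time convention bounds every individual path by $an+b$, so the pumping argument can be carried out on each path separately using only the underlying transition graph, yielding a uniform constant $k$ valid for all paths and hence for the entire distribution. Once $k$ is secured, assembling the stochastic matrices and verifying that their products reproduce $N$'s acceptance probability is bookkeeping directly analogous to the argument for Claim \ref{reduce-1PTM-1pfa} cited in the statement of Lemma \ref{1C=LIN/lin-character}, so no further ingredients should be required.
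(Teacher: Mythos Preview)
The proposal is correct and takes essentially the same approach as the paper, which itself gives no explicit proof of this lemma but treats it (like Lemma~\ref{1C=LIN/lin-character}) as following from the folding-machine argument behind Claim~\ref{reduce-1PTM-1pfa} and the machinery of \cite{TYL04}. One small point of precision: in the paper's route the crossing-sequence construction first yields a rational 1GPFA rather than stochastic matrices directly, and a separate conversion step (made explicit in Claim~\ref{reduce-1PTM-1pfa}) then produces a genuine 1pfa; your write-up elides this intermediate object, but since you defer the bookkeeping to that claim, nothing is actually missing.
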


\begin{proofof}{Lemma \ref{oneplin-criteria}}
Let $A$ be any language in $\oneplin/lin$ over an alphabet $\Sigma$. For this $A$, we can take a 1pfa $M$, an advice alphabet 
$\Gamma$, and an advice function $h:\nat\rightarrow\Gamma^*$, as 
described in Lemma \ref{character-PLIN}. Let $M=(Q,\Lambda,\nu_{ini},\{M_{\sigma}\}_{\sigma\in\check{\Lambda}},F)$, where $\Lambda = \{\track{\sigma}{\tau}\mid \sigma\in \Sigma,\tau\in\Gamma\}$. 
To make our argument simple, we make the following extra 
requirement: 
the success probability of $M$ on any input string never becomes 
exactly $1/2$. This requirement can be easily met by an appropriate modification 
of the given 1pfa $M$ (see, \eg \cite{TYL04}). In the subsequent 
argument, we assume that $\prob_{M}[M(\track{x}{h(n)}) = A(x)]>1/2$ 
for every string $x$.

Choose $n,\ell\geq1$ arbitrarily with $\ell\leq n-1$. To follow the 
proof of Lemma \ref{cequallin-lemma}, we intend to use the same notations $\tilde{x} = \track{x}{r}$, $\tilde{y}=\track{y}{s}$, $\tilde{w} = \track{w}{r}$, and 
$\tilde{w}_i = \track{w_i}{r}$ for an advice string $h(n)=rs$. 
The only difference from the proof of Lemma \ref{cequallin-lemma} is that we do not need to take a fixed input string $z$ that forces $M$ to accept.  Here, we choose a maximal subset $S'$ of linearly-independent vectors in the set $\{\nu_{ini} M_{\cent\tilde{w}}\mid w\in \Sigma^{n-\ell}\}$ and we then define $S=\{w\in\Sigma^{n-\ell}\mid \nu_{ini} M_{\cent\tilde{w}}\in S'\}$. Set $m=|S|$ and let $S=\{w_1,w_2,\ldots,w_m\}$. Now, fix $x\in\Sigma^{n-\ell}$ 
arbitrarily. Claim \ref{sum-up-to-one} ensures the existence of 
a series $\{\alpha_{w_i}\}_{i\in[m]}$ of real numbers that satisfy 
(i) $\sum_{i=1}^{m}\alpha_{w_i}=1$ and (ii) 
$
\nu_{ini}M_{\cent\tilde{x}} = \sum_{i=1}^{m} \alpha_{w_i} \left( \nu_{ini}M_{\cent\tilde{w}_i} \right).
$

Next, let $T$ be an arbitrary subset of $\Sigma^{\ell}$ and assume that, for any binary series $r=(r_1,r_2,\ldots,r_m)\in\{0,1\}^m$, there exists a string $y_{r}\in T$ satisfying $A(w_1y_r)A(w_2y_r)\cdots A(w_my_r) = r$. For each string $y\in\Sigma^{\ell}$ and each index $i\in[m]$, let $p_{acc}(w_iy) = 1/2+\beta_{i,y}$ (as before, this notation suppresses the advice string $rs$) for a certain real number  $\beta_{i,y}\in[-1/2,1/2]-\{0\}$. For such $y$'s, we have 
\begin{eqnarray*}
p_{acc}(xy) &=& \sum_{i\in[m]} \alpha_{w_i} \left( \nu_{ini}M_{\cent\tilde{w}_i}M_{\tilde{y}\dollar}\xi_{F}^{T} \right) 
\;\;=\;\; \sum_{i\in[m]} \alpha_{w_i}p_{acc}(w_iy) \\
&=& \frac{1}{2} + \sum_{i\in[m]}\alpha_{w_i}\beta_{i,y} 
\;\;=\;\; \frac{1}{2} 
+ \sum_{i\in[m]}(-1)^{1-A(w_iy)}|\beta_{i,y}|\alpha_{w_i}.
\end{eqnarray*}

To complete the proof, we want to define a binary series $r=(r_1,\ldots,r_m)$ as follows: let $r_i=0$ (or equivalently 
$A(w_iy_r)=0$) if $\alpha_{w_i}<0$, and let $r_i=1$ (or $A(w_iy_r)=1$) 
if $\alpha_{w_i}\geq 0$. Let us consider the string $y_{r}$ associated with this $r$, and make $y_r$ as our desired string $y$. By the choice 
of $y$, we have 
\[
\sum_{i\in[m]}(-1)^{1-A(w_iy)}|\beta_{i,y}|\alpha_{w_i} = \sum_{i\in[m]}|\beta_{i,y}||\alpha_{w_i}| >0, 
\]
which implies $p_{acc}(xy)>1/2$. This means that $xy\in A$. Next, we define $r'$ as the bitwise negation of $r$ and write $y'$ for $y_{r'}$. Similar to the previous case, we have 
\[
\sum_{i\in[m]}(-1)^{1-A(w_iy')}|\beta_{i,y'}|\alpha_{w_i} = \sum_{i\in[m]}(-1)|\beta_{i,y'}||\alpha_{w_i}| <0,
\]
and thus we obtain $xy'\not\in A$. This completes the proof of the lemma. 
\end{proofof}

\section{Power of Randomized Advice}

We have so far discussed the roles of a single advice string given 
per each input length $n$.  Instead of giving such a deterministic 
string, we can provide  ``randomized'' advice strings, each of 
which is produced according to a certain fixed probability distribution. 
It turns out that such randomized advice often endows an
enormous power to its underlying machine's language recognition.

In this paper, {\em randomized advice} refers to a probability ensemble $\{D_n\}_{n\in\nat}$, in which each probability distribution $D_n$ is  defined over all advice strings of length $n$. Let $m\in\nat$, let $x$ be any input string, and let $D_m$ be any probability distribution over $\Gamma^m$, where $\Gamma$ is an advice alphabet. 
The notation $\track{x}{D_m}$ indicates a random variable that expresses $\track{x}{y}$, which is chosen randomly with probability $D_m(y)$ over all strings $y$ in $\Gamma^m$. Similarly, given a machine $M$, the notation $M(\track{x}{D_m})$ denotes a random variable expressing the outcome $M(\track{x}{y})$ of $M$ on an input $\track{x}{y}$, which is chosen randomly according to $D_m$. 

We use the notation $\onebplin/Rlin$ to denote the collection of all languages $L$ for which there exist a linear-time 1PTM $M$, an error bound 
$\varepsilon\in[0,1/2)$, a probability ensemble $\{D_n\}_{n\in\nat}$, and a linearly-bounded function $\ell:\nat\rightarrow\nat$ satisfying: for every input $x$, if $x\in L$ then $M$ accepts $\track{x}{D_m}$ with probability at least $1-\varepsilon$; otherwise, $M$ rejects $\track{x}{D_m}$ with probability at least $1-\varepsilon$,  where $m=\ell(n)$.  For brevity, we write 
$\prob_{M,D_{m}}[M(\track{x}{D_{m}})=L(x)]\geq 1-\varepsilon$. 
In a similar fashion, we may define $\onecequallin/Rlin$ and $\oneplin/Rlin$ with linear-size randomized advice; however, those two language families are so powerful that they can recognize {\em all}  languages.

\begin{proposition}
The advised language family $\onecequallin/Rlin$ 
as well as $\oneplin/Rlin$ consists of all languages.
\end{proposition}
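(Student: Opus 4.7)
The plan is to exhibit, for each language $L$ over an alphabet $\Sigma$, a fixed linear-time 1PTM $M$ and a probability ensemble $\{D_n\}_{n\in\nat}$ witnessing $L\in\oneplin/Rlin$ (respectively $L\in\onecequallin/Rlin$). The key idea is that, once advice is randomized, we can draw $y$ from an appropriate cross-section of $L$ at length $n$, and $M$ need only perform a single linear sweep of the two-track tape to compare $x$ against $y$. To accommodate the boundary cases $L\cap\Sigma^n=\setempty$ and $L\cap\Sigma^n=\Sigma^n$, I will prepend a single ``mode'' bit $b$ to the advice, giving total advice length $n+1$, which is still linearly bounded.

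For $\oneplin/Rlin$: on advice $(b,y)$ with $b\in\{0,1\}$ and $|y|=n$, let $M$ reject if $b=1$; if $b=0$, let $M$ flip a fair coin, accept on heads, and on tails accept iff $x_i=y_i$ at every position. Set $D_n$ to be the point mass on $(1,0^n)$ when $L\cap\Sigma^n=\setempty$, and otherwise to be the distribution that outputs $(0,y)$ with $y$ drawn uniformly from $L\cap\Sigma^n$. Writing $k_n=|L\cap\Sigma^n|$, the acceptance probability equals $0$ if $k_n=0$; equals $\frac{1}{2}+\frac{1}{2k_n}>\frac{1}{2}$ when $k_n\geq 1$ and $x\in L$; and equals $\frac{1}{2}$ when $k_n\geq 1$ and $x\notin L$. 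Hence the acceptance probability exceeds $\frac{1}{2}$ iff $x\in L$, as required by the $\oneplin$ definition.

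For $\onecequallin/Rlin$: I only modify the comparison step and the distribution. When $b=0$, let $M$ reject if $x=y$ and otherwise flip a fair coin and accept on heads; when $b=1$, let $M$ simply flip a fair coin and accept on heads. Define $D_n$ to be the point mass on $(1,0^n)$ when $L\cap\Sigma^n=\Sigma^n$, and otherwise to output $(0,y)$ with $y$ drawn uniformly from $\overline{L}\cap\Sigma^n$. The acceptance probability equals $\frac{1}{2}$ when $L\cap\Sigma^n=\Sigma^n$ (trivially) and when $x\in L\cap\Sigma^n\subsetneq\Sigma^n$ (because then $\prob[y=x]=0$); and it equals $\frac{1}{2}(1-1/|\overline{L}\cap\Sigma^n|)\neq\frac{1}{2}$ when $x\in\overline{L}\cap\Sigma^n$. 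Therefore the acceptance probability is exactly $\frac{1}{2}$ iff $x\in L$, as required.

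The remaining routine checks are that $M$ fits the 1PTM model in linear time (reading the mode bit, one left-to-right sweep to test bit-wise equality on the two tracks, and $O(1)$ fair coin flips altogether take $O(n)$ steps) and that all transition probabilities used are rational (only $1/2$ and uniform distributions over finite subsets of $\Sigma^n$ appear). I do not foresee any serious obstacle; the only delicate point is the treatment of the two extremal cross-sections $L\cap\Sigma^n\in\{\setempty,\Sigma^n\}$, which is exactly what the single mode bit is designed to handle.
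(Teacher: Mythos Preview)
Your proof is correct and follows essentially the same idea as the paper: draw the advice uniformly from the appropriate length-$n$ cross-section of $L$ (or of $\overline{L}$), and have the machine do a single symbol-by-symbol equality test against the input, padded with one fair coin flip. The only cosmetic differences are that the paper handles the degenerate cross-section via an extra advice symbol $\#$ (keeping advice length exactly $n$) rather than your prepended mode bit, and that the paper proves only the $\onecequallin/Rlin$ case and then invokes the inclusion $\onecequallin/Rlin\subseteq\oneplin/Rlin$ instead of giving a separate $\oneplin$ construction.
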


\begin{proof}
Let $A$ be an arbitrary language over an alphabet $\Sigma$. 
Our goal is to show that $A$ belongs to $\onecequallin/Rlin$. 
To simplify our proof, we assume that $\lambda\not\in A$ and we hereafter consider only positive input lengths. 
For each length $n\in\nat^+$, we write $A_n$ for $A\cap\Sigma^n$ 
and $\overline{A}_n$ for $\Sigma^n\setminus A_n$. Moreover, we set $\Gamma=\Sigma\cup\{\#\}$, where $\#$ is a special symbol not in $\Sigma$. 

Let us define a probability distribution $D_n$ over $\Gamma^n$, which generates only strings in $\Sigma^n$ as well as the string $\#^n$ with positive probabilities. Let $y$ be any advice string in $\Gamma^n$. Whenever 
$y\not\in\Sigma^n\cup\{\#^n\}$, let $D_n(y)=0$. Henceforth, we are 
focused only on advice strings $y$ in $\Sigma^n\cup\{\#^n\}$. If $\overline{A}_n=\setempty$, then we set $D_n(y)=1$ if $y=\#^n$, and $D_n(y)=0$ 
for the other advice strings $y$. When $\overline{A}_n\neq\setempty$, 
we set $D_n(y)=1/|\overline{A}_n|$ if $y\in \overline{A}_n$, and $D_n(y)=0$ 
if $y\in A_n\cup\{\#^n\}$. 
Our 1pfa $M$ works as follows. Let $\track{x}{s}$ be an arbitrary input of length $n$. If $x=s$, then $M$ accepts $\track{x}{s}$ with certainty; otherwise, it accepts and rejects $\track{x}{s}$ with an equal probability. Note that, when $s=\#^n$, $\track{x}{s}$ is never accepted. 

In the case where $x\not\in A$, the acceptance probability of $M$ 
on $\track{x}{s}$ 
is at least $1/2+ D_n(x)/2$ ($=1/2+1/2|\overline{A}_n|$), which is clearly more than $1/2$. By contrast, when $x\in A$, the acceptance probability 
on $\track{x}{s}$ is exactly $1/2$ since $D_n(x)=0$. Therefore, $A$ belongs to $\onecequallin/Rlin$. Since $\onecequallin/Rlin\subseteq \oneplin/Rlin$, the proposition easily follows.
\end{proof}

We return to the advised language family 
$\onebplin/Rlin$. Earlier, Tadaki \etalc~\cite{TYL04} showed that $\onebplin$ coincides with $\reg$. 
Let us present a similar characterization of $\onebplin/Rlin$ using one-way finite automata. First, we introduce $\reg/Rn$---a natural   
extension of $\reg$ by supplying randomized advice.   
Formally, a language $A$ over an alphabet $\Sigma$ is in $\reg/Rn$ if there exist a 1dfa $M$, an error bound $\varepsilon\in[0,1/2)$, an advice alphabet $\Gamma$, and a probability ensemble  $\{D_n\}_{n\in\nat}$ over $\Gamma^*$ that satisfy the following condition: for any length 
$n\in\nat$ and any string $x\in\Sigma^n$, if $x\in A$ then $M$ accepts $\track{x}{D_n}$ with probability $\geq 1-\varepsilon$; otherwise, $M$ rejects $\track{x}{D_n}$ with probability $\geq 1-\varepsilon$. 
Likewise, we can define another advised language family $\cfl/Rn$ using 
1npda's (instead of 1dfa's) together with randomized advice. 
Obviously, $\reg/Rn\subseteq \cfl/Rn$. 
In comparison to Lemma \ref{onedlin-vs-reg}, we shall prove that the two families $\onebplin/Rlin$ and $\reg/Rn$ coincide. Notice that this result is not a direct consequence of the aforementioned equality $\onebplin = \reg$ of Tadaki \etalc~\cite{TYL04}; rather, it is from the fact that bounded-error probabilistic computation can be integrated into 
randomized advice. 

\begin{theorem}\label{1BPLIN-vs-REG}
$\onebplin/Rlin = \reg/Rn$.
\end{theorem}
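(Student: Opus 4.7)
The easy direction $\reg/Rn \subseteq \onebplin/Rlin$ is immediate: a 1dfa is a zero-error linear-time 1PTM, and an advice string of length $n$ is a legal linear-size advice, so bounded-error over the random advice transfers verbatim. The bulk of the work is to show $\onebplin/Rlin \subseteq \reg/Rn$. Fix $L \in \onebplin/Rlin$ witnessed by a linear-time 1PTM $M$, an error bound $\varepsilon \in [0,1/2)$, an advice length $\ell(n) \leq cn+d$, and an ensemble $\{D_n\}_{n\in\nat}$ satisfying $\prob_{y \sim D_{\ell(n)},\,\text{coins of }M}[M(\track{x}{y}) = L(x)] \geq 1-\varepsilon$ for every $x$ of length $n$.

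The first step is to reduce $M$ to a 1pfa whose input and advice tracks are both of length exactly $n$. I apply the ``folding machine'' construction of Tadaki \etalc~\cite{TYL04} (the same technique that underlies the equality $\onebplin = \reg$ and that supplies the 1pfa-characterizations of Lemmas~\ref{1C=LIN/lin-character} and \ref{character-PLIN}) simultaneously with a repackaging of the advice: writing $\ell(n) \leq (c+1)n$ for large $n$ and grouping every block of $c+1$ consecutive advice symbols (with blank padding where needed) into a single super-symbol over the enlarged alphabet $\Gamma^{c+1}$, the advice becomes a string of length exactly $n$. The folding simulates the two-way linear-time computation of $M$ by recording the bounded-length crossing sequences in the pfa's finite state, and the resulting 1pfa $M'$ over advice alphabet $\Gamma' = \Gamma^{c+1}$ satisfies, for each fixed $y \in (\Gamma')^n$, the identity $\prob_{M'}[M'(\track{x}{y})=1] = \prob_{M}[M(\track{x}{y'})=1]$ where $y'$ is the original advice encoded by $y$. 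Consequently, under the pushforward ensemble $\{D'_n\}$ on $(\Gamma')^n$, the joint probability over $(y,\text{coins})$ is preserved and $M'$ still has bounded error $\varepsilon$.

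The second step is to absorb the remaining internal randomness of $M'$ into the advice. Because $M'$ uses finitely many rational transition probabilities, each of its $n+2$ steps draws at most a constant number $k$ of random bits. I enlarge the advice alphabet once more to $\Gamma'' = \Gamma' \times \{0,1\}^k$ and define a new randomized advice $E_n$ on $(\Gamma'')^n$ by sampling $y \sim D'_n$ and independently uniform bit-blocks $r_1,\ldots,r_n \in \{0,1\}^k$, then placing $(y_i, r_i)$ at position $i$. A 1dfa $N$ is then specified by: on reading $\track{x_i}{(y_i,r_i)}$ in state $q$, take the unique transition that $M'$ would take in state $q$ on reading $\track{x_i}{y_i}$ with coin outcomes $r_i$. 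By construction,
\[
\prob_{z \sim E_n}[N(\track{x}{z}) = L(x)] \;=\; \prob_{y \sim D'_n,\,M'}[M'(\track{x}{y}) = L(x)] \;=\; \prob_{y \sim D_n,\,M}[M(\track{x}{y}) = L(x)] \;\geq\; 1-\varepsilon,
\]
witnessing $L \in \reg/Rn$.

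The main obstacle is the folding construction in the first step, which requires turning a two-way linear-time probabilistic computation into a one-way pfa reading the input and a repackaged advice of length $n$; this is the same nontrivial technical device that underlies Claim~\ref{reduce-1PTM-1pfa} and $\onebplin = \reg$. The second step is essentially bookkeeping and rests on a single clean observation: a one-way pfa consumes only boundedly many coin bits per input symbol, so its entire coin sequence can be pre-sampled and laid onto the advice track at matching positions, turning the pfa into a dfa at the sole cost of enlarging the advice alphabet by a constant factor.
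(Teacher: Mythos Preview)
Your two-stage plan---fold the linear-time 1PTM into a 1pfa on length-$n$ advice, then push the pfa's internal randomness onto the advice track to obtain a 1dfa---is exactly the paper's strategy, and your Step~1 matches the paper's Claim~\ref{reduce-1PTM-1pfa} essentially verbatim.

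There is, however, a technical slip in Step~2. You assert that because $M'$ has finitely many rational transition probabilities, ``each of its $n+2$ steps draws at most a constant number $k$ of random bits,'' and you then sample those bit-blocks \emph{uniformly}. But a rational 1pfa need not be realizable by a bounded number of fair coin tosses per step: if some transition fires with probability $1/3$, no finite block of unbiased bits produces that probability exactly, so your displayed equality of acceptance probabilities fails. The paper avoids this by inserting an intermediate normalization stage: it chooses a common denominator $d$ for all transition probabilities, blows up each state into $d$ copies so that every transition probability becomes either $0$ or $1/d$, and then places a uniformly random index $k\in[d]$ (not a bit string) on the advice track; the dfa reads $(\tau,k)$ and deterministically takes the $k$-th of the $d$ equiprobable successors. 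Your argument is repaired by the same device---or, since the $\reg/Rn$ framework allows arbitrary advice distributions, by simply dropping the word ``uniform'' and letting the distribution on the coin component be whatever is needed to match the pfa's step---but as written the uniform-bits formulation does not go through. (A second, more cosmetic point: the coin draws at the two endmarkers have no advice cell to sit on; these finitely many bits must be folded into the first or last advice symbol or into the initial state, which both you and the paper leave implicit.)
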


\begin{proof}
Since every 1pfa can be simulated by a certain linear-time 1PTM, the inclusion $\reg/Rn\subseteq \onebplin/Rlin$ follows immediately. 
Hereafter, we pay our attention to the remaining inclusion. The 
following proof consists of three stages. 
Assume that $A$ is any language in $\onebplin/Rlin$, witnessed by a linear-time 1PTM $M$ and a probability ensemble $\{D_n\}_{n\in\nat}$ over the set $\Gamma^*$ of advice strings. To ease the notational complexity,  we use the same terminology given in \cite{TYL04}. In the first stage, we claim that this linear-time 1PTM $M$ can be replaced by a certain 1pfa  
$N$ even in the presence of randomized advice. 

\begin{claim}\label{reduce-1PTM-1pfa}
There exists a 1pfa $N$ and an error bound $\varepsilon\in[0,1/2)$ 
such that, for every length $n\in\nat$ and every string $x\in\Sigma^n$,  $\prob_{N,D_n}[N(\track{x}{D_n})=A(x)]\geq 1-\varepsilon$.
\end{claim}

\begin{proof}
{}From the length requirement for our randomized advice $\{D_n\}_{n\in\nat}$,  we can assume without loss of generality that, for every string $x\in\Sigma^*$, $\prob_{M,D_{m(|x|)}}[M(\track{x}{D_{m(|x|)}})=A(x)]\geq 1-\varepsilon$, where a length function $m$ satisfies that $n\leq m(n)\leq kn$ for every length $n\in\nat$.   
Now, we assume that $M$ takes an input of the form $\track{x\#^{|w|-n}}{w}$ with an advice string $w\in\Gamma^{m(n)}$. 

How can we simulate $M$'s moves along the tape cells indexed from 
$-2k(n-1)$ to $2k(n-1)-1$ using only its ``input area'' (\ie the tape region where the original input string $x$ is written, together 
with the two endmarkers)? 
As shown in [10, Section 4.2], it is possible to ``fold'' $M$'s tape content into its input area; namely, the tape content is partitioned into $4k$ blocks of size $n-1$, indexed from left to right by numbers 
between $-2k$ and $2k-1$. 
Each block's content is written in one of $4k$ tracks, indexed from 
top to bottom by the same numbers, of a new tape 
so that we use only $n$ tape cells to simulate $M$'s entire behavior on the input  $\track{x\#^{|w|-n}}{w}$. This gives rise to a so-called ``folding machine,'' which simulates $M$ using only its input area. 

Let $cont(x,w)$ denote a string obtained by folding the tape content $\track{x\#^{|w|-n}}{w}$ into its input area.  By deleting all symbols in $\Sigma$ from $cont(x,w)$, we obtain a new advice string, say, $w'$ of length $n$. For this $w'$, set  $D'_n(w')=D_{m(n)}(w)$. A new 1PTM $M'$ behaves as follows: on input $\track{x}{w'}$, first recover the string  $cont(x,w)$ in linear time and then simulate $M$'s folding machine 
using $cont(x,w)$ as its new input string. 

{}From the above argument, $M'$ takes new advice that is randomly distributed over strings of length equal to input size and 
moves its tape head between (and on) the two endmarkers. 
Apply [10, Lemma 6.5] to obtain a rational {\em one-way generalized probabilistic finite automaton} (or 1GPFA, in short) $\tilde{N}$ 
such that $p_{M'}(\track{x}{w})=p_{\tilde{N}}(\track{x}{w})$ for any pair $x,w$. We then simulate this 1GPFA by another rational 1pfa $N$ with preserving the same acceptance/rejection probability. This proves the claim. 
\end{proof}

Now, we assume that $N$ is a 1pfa working with the randomized advice $\{D'_n\}_{n\in\nat}$.
Let $N=(Q,\Lambda,\nu_{ini},\{M_{\sigma}\}_{\sigma\in\check{\Lambda}},F)$, where $\Lambda = \{\track{\sigma}{\tau}\mid \sigma\in\Sigma,\tau\in\Gamma\}$. Notice that each $M_{\sigma}$ uses 
only {\em rational} transition probabilities. In the second stage, we modify $M_{\sigma}$ so that it uses only transition probabilities of either $0$ or $1/d$ for a certain fixed positive integer $d$. 
This modification can be done in the following way.
Choose a sufficiently large positive integer $d$ so that any transition probability of $N$ can be expressed as $k/d$, where $k\in\nat$. Here, 
we wish to define another 1pfa $N'$. Assume that 
$Q=\{q_1,q_2,\ldots,q_m\}$. For each inner state $q_i$, we prepare $d$ 
new states $\{q^{(i)}_{1},q^{(i)}_{2},\ldots,q^{(i)}_{d}\}$. In scanning $\track{\sigma}{\tau}$, if there is a transition from inner state $q_i$ to $q_j$ with transition probability $p_{i,j}/d$, then, for each index  
$k\in[d]$, we assign the probability $1/d$ to a new transition from $q^{(i)}_{k}$ to $q^{(j)}_{l}$ for every index $l\in[p_{i,j}]$; we assign the probability $0$ to a transition from $q^{(i)}_{k}$ to $q^{(j)}_{l}$ for any other indices $l$. It is not difficult to show that this 1pfa $N'$ has the same acceptance/rejection probability as the original 1pfa $N$. 
 
Let $N'=(Q',\Lambda,\nu_{ini},\{N_{\sigma}\}_{\sigma\in\check{\Lambda}},F')$. 
For simplicity, we assume that all inner states in $Q'$ are enumerated in a pre-fixed total order. In this final stage, we want to define the desired 1dfa $\tilde{M}$ and the desired probability ensemble $\{\tilde{D}_n\}_{n\in\nat}$ that together simulate $N'$ using  $\{D'_n\}_{n\in\nat}$ with bounded-error probability. 
Let $\Delta = \{(\tau,k)\mid \tau\in\Gamma, k\in[d]\}$ be our new advice alphabet. For any advice string $w=(\tau_1,k_1)(\tau_2,k_2)\cdots (\tau_n,k_n)\in \Delta^n$ of length $n$,  
define  $\tilde{D}_n(w) = D'_n(\tau_1\tau_2\cdots\tau_n)/d^n$. Our 1dfa $\tilde{M}$ then behaves as follows. Note that, on scanning a symbol $\varphi = \track{\sigma}{\tau}$ in inner state $q_j$, since $N_{\varphi}$ is stochastic, 
$N'$ enters exactly $d$ different states, say,  $q'_{j,1},q'_{j,2},\ldots,q'_{j,d}$ (in the given order) with probability exactly $1/d$. Associated with this transition, we define a new transition of $\tilde{M}$ on a new input symbol $\track{\sigma}{(\tau,k)}$, where $k\in[d]$. In scanning this symbol $\track{\sigma}{(\tau,k)}$,  
$\tilde{M}$ enters  state $q'_{j,k}$ from $q_j$ deterministically. By the definitions of $\tilde{M}$ and $\tilde{D}_n$, it directly follows that the acceptance/rejection probability of $\tilde{M}$ with $\{\tilde{D}_n\}_{n\in\nat}$ equals that of $N'$ with $\{D'_n\}_{n\in\nat}$.

In conclusion, $A$ is recognized by the 1dfa $\tilde{M}$ with the randomized advice $\{\tilde{D}_n\}_{n\in\nat}$. This implies that $A$ is in $\reg/n$.  Since $A$ is arbitrary, we conclude that $\onebplin/Rlin \subseteq \reg/Rn$. 
\end{proof}

We have proven that $\onecequallin/Rlin$ and $\oneplin/Rlin$ are powerful enough to capture all languages. Even for weak families, such as $\reg$, randomized advice is more resourceful than deterministic advice.  

\begin{proposition}\label{Pal-random-reg}
$\dcfl\cap \reg/Rn \nsubseteq \reg/n$.
\end{proposition}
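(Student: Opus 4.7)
The witness will be the center-marked palindrome language $L_{pal}=\{wcw^{R}:w\in\{0,1\}^{*}\}$ over $\Sigma=\{0,1,c\}$. Membership in $\dcfl$ is standard: a deterministic pushdown automaton pushes each symbol of the first half onto the stack, switches mode upon reading the unique center $c$, and then pops and matches against each subsequent input symbol, accepting exactly when input and stack are simultaneously exhausted; malformed inputs (zero or multiple $c$'s, or any mismatch) are rejected deterministically.

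The main creative step is to place $L_{pal}$ in $\reg/Rn$. Fix a constant $t\geq 2$ and set the advice alphabet $\Gamma=\{0,1\}^{t}$. For each odd length $n=2k+1$, sample $t$ independent palindromic masks $r^{(1)},\ldots,r^{(t)}$, each of the form $r^{(j)}=r^{(j)}_{1}\cdots r^{(j)}_{k}\,0\,r^{(j)}_{k}\cdots r^{(j)}_{1}$ with the $k$ left-half bits drawn uniformly and independently from $\{0,1\}$, and place the tuple $(r^{(1)}_{i},\ldots,r^{(t)}_{i})$ at position $i$ of the advice track to define $D_{n}$. For even $n$ let $D_{n}$ be any fixed distribution. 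The 1dfa $M$ first verifies syntactically that the input has the form $wcv$ with $|w|=|v|$ and a single $c$ (rejecting otherwise, so even lengths and malformed odd inputs are handled outright). While scanning, $M$ maintains $t$ running parity bits $\beta_{j}$, each updated by $\beta_{j}\leftarrow\beta_{j}\oplus(\text{current input bit})\cdot(\text{current advice bit of mask }j)$, and $M$ accepts iff all $\beta_{j}=0$ at end of input. Using the symmetry $r^{(j)}_{k+1+i}=r^{(j)}_{k+1-i}$, a short calculation gives
\[
\beta_{j}\;=\;\sum_{i=1}^{k}(w_{i}\oplus v_{k+1-i})\,r^{(j)}_{i}\pmod{2}.
\]
If $wcv\in L_{pal}$ then every coefficient vanishes, so $\beta_{j}=0$ with certainty and $M$ accepts with probability $1$. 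If $wcv\notin L_{pal}$ then some coefficient equals $1$, making $\beta_{j}$ a nontrivial parity of independent uniform bits, whence $\Pr[\beta_{j}=0]=1/2$; by independence across $j$, $M$ accepts with probability $2^{-t}\leq 1/4$. This establishes bounded-error recognition, so $L_{pal}\in\reg/Rn$.

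The lower bound $L_{pal}\notin\reg/n$ is then immediate from Theorem \ref{REG/n-character}. Fix $k\geq 1$, set $n=2k+1$, and pick two distinct strings $w,w'\in\{0,1\}^{k}$. Choosing $z=cw^{R}$ gives $|wz|=|w'z|=n$, $wz=wcw^{R}\in L_{pal}$, yet $w'z=w'cw^{R}\notin L_{pal}$ because $w'\neq w$. Clause (b) of Theorem \ref{REG/n-character} then forces $(w,n)\not\equiv_{L_{pal}}(w',n)$, producing at least $2^{k}$ distinct classes at the single length $n$. Letting $k\to\infty$ makes $\Delta/\equiv_{L_{pal}}$ infinite, contradicting clause (a). Hence $L_{pal}\notin\reg/n$, and combining the three items witnesses $L_{pal}\in(\dcfl\cap\reg/Rn)\setminus\reg/n$.

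The principal obstacle is the middle step: a 1dfa has only constant memory and no ability to revisit the first half of the input, so it cannot by itself compare $w$ with $v^{R}$. The trick is to bake the reflection into the randomness via a palindromic mask, which collapses the non-regular equality test into a single random parity check; the jump from the natural one-sided error $1/2$ to something strictly below $1/2$ is then paid for by the constant blowup $t$ of the advice alphabet.
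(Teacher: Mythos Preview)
Your approach mirrors the paper's almost exactly: the same witness language (marked palindromes), the same random inner-product test against a palindromic mask for $\reg/Rn$ membership, and the same counting argument for the $\reg/n$ lower bound (the paper cites \cite{Yam08a} while you invoke Theorem~\ref{REG/n-character} directly). However, there is one genuine gap in your $\reg/Rn$ upper bound.

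You write that ``the 1dfa $M$ first verifies syntactically that the input has the form $wcv$ with $|w|=|v|$.'' With your advice alphabet $\Gamma=\{0,1\}^{t}$ this cannot be done: a 1dfa cannot count, and your advice carries no distinguished center marker---the middle cell is the all-zero tuple, which occurs at other positions with positive probability and is in any case not a symbol the automaton can single out deterministically. The failure is concrete: on input $0^{m}c\,0^{2k-m}$ with $m\neq k$ (length $2k+1$), every input bit is $0$, so every parity $\beta_{j}$ is identically $0$ regardless of the masks, and $M$ accepts with probability $1$; yet this string is not in $L_{pal}$.

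The fix is immediate and is exactly what the paper does: enrich the advice so that the center is explicitly marked. For instance, take $\Gamma=\{0,1\}^{t}\times\{0,1\}$ and let the extra coordinate be $1$ only at position $k+1$; then $M$ rejects unless the unique $c$ in the input sits directly above the unique $1$ in this coordinate. (Equivalently, the paper uses advice of the form $y\#y^{R}$ with a fresh symbol $\#$.) With this correction your parity argument goes through verbatim.
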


\begin{proof}
Consider a ``marked'' version of the language of {\em even-length palindromes}: $Pal_{\#} = \{w\#w^{R}\mid w\in\{0,1\}^*\}$ defined over the ternary alphabet $\Sigma = \{0,1,\#\}$. Clearly, 
$Pal_{\#}$ belongs to $\dcfl$ (because of the presence of the center marker $\#$). It is proven in \cite{Yam08a} that the language $Pal=\{ww^{R}\mid w\in\{0,1\}^*\}$ is outside of $\reg/n$. By a similar proof, we can show that $Pal_{\#}$ is not in $\reg/n$. What remains is to prove that $Pal_{\#}$ belongs to $\onebplin/Rlin$, which equals $\reg/Rn$ by Theorem \ref{1BPLIN-vs-REG}. 

Fix $n$ arbitrarily. 
Our probability distribution $D_n$ chooses advice strings of the form 
$y\#y^{R}$, where $y\in\{0,1\}^n$, with an equal probability. More precisely, for each advice string $y$ in $\{0,1\}^n$, we define 
$D_n(y\#y^{R}) = 2^{-n}$; for any other advice string $w$ in $\Sigma^{2n+1}$, let $D_n(w) =0$. 
Next, we describe an underlying 1PTM $M$ for $Pal_{\#}$ with the  above randomized advice $\{D_n\}_{n\in\nat}$. The machine $M$ behaves on input $\track{x}{w}$ as follows. If $|x|$ is even, then $M$ rejects the input. Assume that $|x|=2n+1$ and $w=y\#y^{R}$. 
Using this advice string $w$, $M$ rejects the input if $x$ is not of the form $x_1\#x_2$ for certain strings $x_1,x_2\in\{0,1\}^n$. 
Henceforth, let us assume that $x=x_1\#x_2$ with $x_1,x_2\in\{0,1\}^n$. The machine $M$ then computes two values $x_1\odot y\;(\mathrm{mod}\;2)$ and $x_2\odot y^{R}\;(\mathrm{mod}\;2)$ separately and  
$M$ finally checks whether $x_1\odot y =  x_2\odot y^{R}\;(\mathrm{mod}\;2)$. 
If those two values are equal, then $M$ accepts the input; 
otherwise, it rejects. 

Obviously, if $x_1=x_2^{R}$ then $M$ accepts $\track{x}{w}$ for any string $y\in\{0,1\}^n$. Otherwise, by the property of $\odot$, $M$ accepts $\track{x}{w}$ for exactly a 
half of $y$'s in $\{0,1\}^n$. 
Furthermore, if we modify $M$ and $\{D_n\}_{n\in\nat}$ to run {\em in parallel} the above procedure twice with two randomly-chosen advice strings $y_1\#y_1^{R}$ and $y_2\#y_2^{R}$ (which can be given as a single advice string of the form $\track{y_1\#y_1^{R}}{y_2\#y_2^{R}}$), then we can reduce the error probability down to $1/4$. It therefore holds that 
$Pal_{\#}\in\onebplin/Rlin$. 
\end{proof}

\begin{proposition}\label{reg-cequal-cfl}
$\reg/Rn\cap \onecequallin/lin\nsubseteq\cfl/n$. 
\end{proposition}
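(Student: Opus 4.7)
The plan is to take the duplication language $Dup=\{ww\mid w\in\{0,1\}^*\}$, as foreshadowed in the proof of Theorem~\ref{cequallin-vs-plin}, and to verify three properties: (a) $Dup\in\onecequallin/lin$, (b) $Dup\in\reg/Rn$, and (c) $Dup\notin\cfl/n$. For (c) I would invoke the pumping-style analysis of one-way nondeterministic pushdown automata with length-$n$ advice developed in \cite{Yam08a}; the classical non-context-freeness of $Dup$ is robust enough to survive the addition of input-size advice. The bulk of the work lies in (a) and (b).

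For (b) I would mimic Proposition~\ref{Pal-random-reg}. For each even length $2n$, let the advice distribution $D_{2n}$ draw two independent uniform strings $y_1,y_2\in\{0,1\}^n$ and output the advice $(y_{1,1},y_{2,1},L)(y_{1,2},y_{2,2},L)\cdots(y_{1,n},y_{2,n},L)(y_{1,1},y_{2,1},R)\cdots(y_{1,n},y_{2,n},R)$ of length $2n$ over the finite alphabet $\{0,1\}^2\times\{L,R\}$; for odd lengths pick a fixed rejection-triggering advice. A 1dfa reading $\track{x}{D_{2n}}$ with $x=x_1x_2$, $|x_1|=|x_2|=n$, maintains four parity bits in its state and in a single pass computes the $\mathrm{GF}(2)$ inner products $x_1\odot y_j$ and $x_2\odot y_j$ for $j=1,2$, accepting iff $x_1\odot y_j=x_2\odot y_j$ for both $j$. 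If $x_1=x_2$ it accepts with certainty; if $x_1\neq x_2$ each single test fails with probability exactly $1/2$, so both tests pass with probability $1/4$. This gives bounded error $1/4<1/2$, and hence $Dup\in\onebplin/Rlin=\reg/Rn$ by Theorem~\ref{1BPLIN-vs-REG}.

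For (a), Lemma~\ref{1C=LIN/lin-character} reduces the task to building a 1pfa with deterministic advice whose acceptance probability is exactly $1/2$ on strings in $Dup$ and different from $1/2$ elsewhere. Use advice $h(2n)=L^nR^n$ (and an odd-length advice that forces acceptance probability $\neq 1/2$). The 1pfa splits its initial probability evenly into two independent 2-state subautomata: branch $A$ (active in the first half, idle in the second) and branch $B$ (idle in the first, active in the second). Both branches use the stochastic transition matrices $M_0=\tinymatrices{1/2}{1/2}{0}{1}$ and $M_1=\tinymatrices{1}{0}{1/2}{1/2}$ on their active halves, starting from distribution $(1,0)$. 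A direct induction shows that after reading $w\in\{0,1\}^n$ the probability of being in the first state equals $(V(w)+1)/2^n$, where $V(w)=\sum_{i=1}^n w_i\cdot 2^{i-1}$ is a bijection between $\{0,1\}^n$ and $\{0,1,\ldots,2^n-1\}$. Declaring the accepting states to be branch-$A$'s first state together with branch-$B$'s second state, the total acceptance probability becomes $\frac{1}{2}a+\frac{1}{2}(1-b)=\frac{1}{2}+(a-b)/2$, where $a=(V(x_1\cdots x_n)+1)/2^n$ and $b=(V(x_{n+1}\cdots x_{2n})+1)/2^n$; this equals exactly $1/2$ iff $a=b$ iff the two halves coincide.

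The main obstacle is (a): the stochasticity constraint forces the affine updates $p_0\mapsto p_0/2$ and $p_0\mapsto p_0/2+1/2$ to be implemented by genuine row-stochastic matrices, and the 1pfa must encode an entire $n$-bit first-half string as a single rational probability. The key observation is that a rational state probability can take $2^n$ distinct values within one coordinate, so a bijective ``signature'' encoding of the first half is possible; pairing this encoding with the coin-flip between ``accept on the first-half signature'' and ``accept on the complement of the second-half signature'' converts equality of signatures into exactly $1/2$ acceptance probability. Once this trick is in hand, the other two properties follow by routine extensions of techniques already present in the paper.
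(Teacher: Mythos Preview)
Your proposal is correct and follows the paper's own route almost exactly: the same witness language $Dup$, the same three-part plan, the same citation of \cite{Yam08a} for $Dup\notin\cfl/n$, and the same inner-product trick (borrowed from Proposition~\ref{Pal-random-reg}) for $Dup\in\reg/Rn$. The only noticeable difference is in the implementation of $Dup\in\onecequallin/lin$: the paper builds a linear-time 1PTM that computes $p_0=\sum_{i:w_i=0}2^{-i}$ and $p_1=\sum_{i:w_i=1}2^{-i}$ on the first half and then uses a head-rewinding ``middle'' step (an $n$-dependent rescaling matrix realized by walking back to $\cent$) before processing the second half, whereas your two-branch 1pfa via Lemma~\ref{1C=LIN/lin-character} avoids that rescaling altogether and is a bit cleaner; both encodings rest on the same idea of storing an $n$-bit half as a single rational probability and comparing the two halves by making the acceptance probability equal to $\tfrac12$ plus half their difference.
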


\begin{proof}
Recall the language $Dup=\{ww\mid w\in\Sigma^*\}$ over 
the alphabet $\Sigma=\{0,1\}$. 
An idea similar to the proof of Proposition \ref{Pal-random-reg} 
proves that $Dup$ belongs to $\reg/Rn$. 
Since $Dup\not\in \cfl/n$ \cite{Yam08a}, we obtain  $\reg/Rn\nsubseteq\cfl/n$. 
Next, we shall show that $Dup\in \onecequallin/lin$. 
Our advice function $h$ for $Dup$ marks the ``center'' of the tape; namely, $h(n) =0^{n/2-1}110^{n/2-1}$ if $n$ is even, and $h(n)=10^{n-1}$ if $n$ is odd.

Now, we want to define a 1PTM $M$ for $Dup$ with $h$. 
In the following description, $O$ denotes an all-zero matrix of an appropriate size and $I$ denotes an identity matrix. 
For convenience, we describe the behavior of $M$  
as a series of stochastic matrices $\{M_{\sigma}\}_{\sigma\in\check{\Lambda}}$, similar to a 1pfa, 
defined on four inner states  $\{q_0,q_1,q_2,q_3\}$.    
Here, we focus only on inputs of the form $\track{ww'}{h(2n)}$, where $w=w_1\cdots w_n$ and $w'=w'_1\cdots w'_n$ in $\Sigma^{2n}$. 
Initially, $M_{\cent}$ changes $q_0$ to $q_2$ with certainty. 
While reading 
a symbol $\sigma\in\{0,1\}$ appearing in the first half part of the 
string $ww'$, 
$M$  applies the following matrices: 
\[
M_{0} = \matrices{A}{O}{O}{1} \;\;\text{and}\;\; 
M_{1} = \matrices{B}{O}{O}{1}, 
\;\;\text{where}
\]
\[
A = \ninematrices{1}{0}{0}{0}{1}{0}{1/2}{0}{1/2} \;\;\text{and}\;\;
B = \ninematrices{1}{0}{0}{0}{1}{0}{0}{1/2}{1/2}.
\]
After scanning symbols in $w$, $M$ reaches the inner states $q_0$, $q_1$, and $q_2$ with probabilities $p_0=\sum_{i:w_i=0}2^{-i}$, $p_1=\sum_{i:w_i=1}2^{-i}$, and $2^{-n}$, respectively.  

When the head reaches the middle of $11$ written in the lower track, 
it applies the following matrix:
\[
M_{middle} = \matrices{2^{-(n-1)}I}{C}{O}{D}, \;\;\text{where}
\]
\[
C=\matrices{2^{-(n-1)}}{1-2^{-(n-2)}}{2^{-(n-1)}}{1-2^{-(n-2)}} \;\;\text{and}\;\; D=\matrices{2^{-(n-1)}}{1-2^{-(n-2)}}{0}{1}.
\]
This matrix makes the probabilities of entering states $q_0$, $q_1$, and $q_2$ equal to $p_02^{-(n-1)}$, $p_12^{-(n-1)}$, and $2^{-(n-1)}$, respectively. 
(The above matrix can be realized by the following head move: the head moves back to the left endmarker and returns to the end of the first half section by flipping fair coins.) 

As each symbol $\sigma$ in the second half of the string $ww'$, $M$ applies the matrices $M'_{0}=M_{1}$ and $M'_{1}=M_{0}$, 
and after reading $w'$, $M$ enters $q_0$ and $q_1$ with {\em extra} probabilities of $2^{-(n-1)}\sum_{i:w'_i=0}2^{-i}$ and  
$2^{-(n-1)}\sum_{i:w'_i=1}2^{-i}$, respectively. 

Finally, on scanning $\dollar$, if $M$ is already in $q_0$ and in $q_1$, it respectively enters a rejecting state, say,  $q_{rej}$ and an 
accepting state, say,  $q_{acc}$ with certainty; otherwise, it enters $q_{acc}$ and $q_{rej}$  with an equal probability. 

It is not difficult to show that $w=w'$ iff the probability of reaching 
$q_{acc}$ is exactly $1/2$. Therefore, $Dup$ belongs to 
$\onecequallin/lin$. 
\end{proof}

Since $\reg/n\subseteq \cfl/n$, Propositions \ref{Pal-random-reg} and \ref{reg-cequal-cfl} {\em both} yield a class separation $\reg/n\subsetneqq \reg/Rn$. Proposition \ref{reg-cequal-cfl} also yields another separation between $\cfl/n$ and $\cfl/Rn$, because obviously $\reg/Rn \subseteq \cfl/Rn$.

\begin{corollary}
$\cfl/n\subsetneqq \cfl/Rn$.
\end{corollary}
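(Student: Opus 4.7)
The corollary is essentially a bookkeeping consequence of the preceding material, so my plan is to identify a concrete witness language and verify the two required inclusions.

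First, I will check the easy inclusion $\cfl/n\subseteq\cfl/Rn$. Given any language $A\in\cfl/n$ witnessed by a 1npda $M$ and a deterministic advice function $h$ with $|h(n)|=n$, define the probability ensemble $\{D_n\}_{n\in\nat}$ by concentrating all mass on the single advice string: $D_n(h(n))=1$ and $D_n(y)=0$ for every $y\neq h(n)$. With this ensemble, for each input $x$ of length $n$ the outcome $M(\track{x}{D_n})$ coincides (with probability one) with $M(\track{x}{h(n)})$. Hence $M$ accepts, respectively rejects, $\track{x}{D_n}$ with probability $1$ exactly when $x\in A$, respectively $x\notin A$, which trivially satisfies the bounded-error condition for $\cfl/Rn$ with any $\varepsilon\in[0,1/2)$.

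Second, for the strictness, I will exhibit a language separating the two families. Proposition~\ref{reg-cequal-cfl} already supplies the language $Dup=\{ww\mid w\in\{0,1\}^{*}\}$, and establishes two facts that together do all the work: $Dup\in\reg/Rn$, and $Dup\notin\cfl/n$ (the latter being a result of \cite{Yam08a} quoted there). Because a 1dfa is a special form of 1npda (with no nondeterministic moves and an unused stack), the very same underlying machine and randomized-advice ensemble that witness $Dup\in\reg/Rn$ also witness $Dup\in\cfl/Rn$; formally this gives the inclusion $\reg/Rn\subseteq\cfl/Rn$ noted immediately before the corollary. Combining $Dup\in\cfl/Rn$ with $Dup\notin\cfl/n$ yields $\cfl/n\subsetneqq\cfl/Rn$.

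There is no real obstacle here: the content of the corollary is entirely concentrated in Proposition~\ref{reg-cequal-cfl}, and the proof reduces to observing that deterministic advice is the degenerate case of randomized advice, and that $\reg/Rn\subseteq\cfl/Rn$ follows from the trivial simulation of a 1dfa by a 1npda with the same advice ensemble and error bound.
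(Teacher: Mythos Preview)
Your argument is correct and matches the paper's own reasoning: both use $Dup$ as the separating language, invoking Proposition~\ref{reg-cequal-cfl} for $Dup\in\reg/Rn\subseteq\cfl/Rn$ and $Dup\notin\cfl/n$, together with the trivial containment $\cfl/n\subseteq\cfl/Rn$. The only difference is that you spell out the containment argument explicitly, whereas the paper treats it as obvious.
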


In the proof of Proposition \ref{reg-cequal-cfl}, we have shown that 
$Dup$ belongs to $\reg/Rn$. This fact helps us prove the following 
class separation as well. 

\begin{proposition}\label{random-reg-vs-onecequallin}
$\reg/Rn\nsubseteq \onecequallin/lin\cup \co\onecequallin/lin$.
\end{proposition}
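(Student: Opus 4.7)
Consider the language
\[
L\;=\;\{0w\mid w\in Dup\}\;\cup\;\{1w\mid w\in\overline{Dup}\}
\]
over $\Sigma=\{0,1\}$, where $Dup=\{ww\mid w\in\Sigma^*\}$ is the duplication language from Proposition~\ref{reg-cequal-cfl}. The plan is to show (a) $L\in\reg/Rn$ and (b) $L\notin\onecequallin/lin\cup\co\onecequallin/lin$; a single such language witnesses the desired non-inclusion.

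For (a), Theorem~\ref{1BPLIN-vs-REG} gives $\reg/Rn=\onebplin/Rlin$, and this family is trivially closed under complementation (swap the set of final states of the underlying 1pfa). Since Proposition~\ref{reg-cequal-cfl} shows $Dup\in\reg/Rn$, also $\overline{Dup}\in\reg/Rn$. A bounded-error 1PTM for $L$ then reads the first input symbol and commits to the $Dup$-recognizer or the $\overline{Dup}$-recognizer on the remaining tape, both sharing the same randomized-advice ensemble (the $Dup$-advice of length $n-1$, padded by one dummy symbol at the leading position). The error bound is preserved.

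For (b), both non-inclusions reduce to the following closure property of $\onecequallin/lin$:
\begin{quote}
$(\dagger)$ If $B\in\onecequallin/lin$ then for every fixed symbol $c\in\Sigma$ the language $B_c:=\{w\mid cw\in B\}$ is also in $\onecequallin/lin$.
\end{quote}
Granting $(\dagger)$: if $L\in\onecequallin/lin$ then $L_1=\{w\mid 1w\in L\}=\overline{Dup}$ lies in $\onecequallin/lin$, contradicting what is proved inside Theorem~\ref{cequallin-vs-plin}; if instead $\overline{L}\in\onecequallin/lin$ then $(\overline{L})_0=\{w\mid 0w\notin L\}=\overline{Dup}$ lies in $\onecequallin/lin$, the same contradiction. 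Both cases thus collapse to a single already-established impossibility.

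The main obstacle is $(\dagger)$. Given a linear-time 1PTM $M$ with advice $h$ for $B$, I would build a new 1PTM $M'$ with advice $h'$ for $B_c$ that simulates $M$'s computation on the extended tape $\cent\, c\, w\, \dollar$ while $M'$'s real tape is only $\cent\, w\, \dollar$. The trick is to absorb the virtual cell holding $c$ into $M'$'s finite control: whenever $M$'s virtual head visits the $c$-cell, $M'$'s real head is parked on $\cent$ and the inner state records ``virtual position $1$''; head motions between the two leftmost virtual cells become pure state changes, while motions beyond those correspond one-to-one to $M'$'s real head motions. The extra advice symbol $h(|w|+1)_1$ that $M$ would read at the prepended cell is stored either in $M'$'s initial inner state or on an extra track attached to the leftmost cell of $h'(|w|)$, so $|h'(n)|$ remains linearly bounded. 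Exact acceptance probability is preserved and Michel's strong linear-time bound survives up to a constant factor, giving $B_c\in\onecequallin/lin$. The delicate point is that a two-way 1PTM may return to the virtual $c$-cell arbitrarily often, so the ``which-virtual-cell'' bookkeeping must be maintained across every return visit; this is routine finite-state engineering in the spirit of the folding-machine construction already invoked in the proof of Claim~\ref{reduce-1PTM-1pfa}.
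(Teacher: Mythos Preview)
Your proof is correct and in fact establishes the literal set-theoretic reading of the proposition, namely that a \emph{single} language lies in $\reg/Rn$ but outside both $\onecequallin/lin$ and $\co\onecequallin/lin$. The paper's own argument is shorter but weaker in this respect: it uses $\overline{Dup}$ as a witness for $\reg/Rn\nsubseteq\onecequallin/lin$ and then, by complementing both sides, obtains $\reg/Rn=\co\reg/Rn\nsubseteq\co\onecequallin/lin$ with $Dup$ as the witness. Since $Dup\in\onecequallin/lin$ and $\overline{Dup}\in\co\onecequallin/lin$, neither of these alone lies outside the union; the paper is implicitly reading the statement as a conjunction of two separate non-inclusions. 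Your marked-union language $L$ resolves this by packaging both witnesses into one.

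Two remarks on execution. First, the option of storing $h(|w|+1)_1$ in $M'$'s \emph{initial inner state} does not work, because that symbol varies with $n$ while the initial state is fixed; your alternative of placing it on an extra track of $h'(|w|)$ is the right move. Second, your closure lemma $(\dagger)$ becomes entirely routine once you invoke Lemma~\ref{1C=LIN/lin-character}: working with a one-way 1pfa instead of a two-way 1PTM, you simply set $h'(n)_1=(h(n+1)_1,h(n+1)_2)$ and $h'(n)_i=h(n+1)_{i+1}$ for $i\geq 2$, and let $N'$ absorb the transition on $\track{c}{h(n+1)_1}$ into its first real step. This avoids the bookkeeping of repeated head returns to the virtual $c$-cell that your 1PTM sketch has to manage.
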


\begin{proof}
Since $Dup$ is in $\reg/Rn$ (from the proof of Proposition \ref{reg-cequal-cfl}) and $\reg/Rn$ is obviously closed under 
complementation, $\overline{Dup}$ is also in $\reg/Rn$. In the proof of Theorem \ref{cequallin-vs-plin}, however, it is shown that 
$\overline{Dup}$ does not belong to $\onecequallin/lin$. 
Those two results imply that 
$\reg/Rn\nsubseteq \onecequallin/lin$. By considering their complement classes, we obtain another separation: $\co\reg/Rn\nsubseteq \co\onecequallin/lin$. It therefore follows that $\reg/Rn = \co\reg/Rn\nsubseteq \co\onecequallin/lin$, as requested.
\end{proof}

\section{Limitation of Randomized Advice}

The previous section has demonstrated a power of randomized advice; for example, we have shown that $\reg/Rn\nsubseteq \cfl/n \cup \onecequallin/lin$. By contrast, this section shall discuss a limitation of the randomized advice. In particular, we intend to show that $\cfl\nsubseteq \reg/Rn$; in short, even with a help of the randomized advice, $\reg$ cannot capture $\cfl$. This result significantly extends the previously-known separation $\cfl\nsubseteq\reg/n$ \cite{TYL04}

\begin{theorem}\label{cfl-vs-random-reg}
$\cfl\nsubseteq \reg/Rn$.
\end{theorem}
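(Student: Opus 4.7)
The approach is to exhibit a specific context-free language $L^{*}$ and show $L^{*} \notin \reg/Rn$. By Theorem~\ref{1BPLIN-vs-REG}, $\reg/Rn = \onebplin/Rlin$, so the task reduces to ruling out every linear-time 1PTM (equivalently, every 1pfa) with bounded-error randomized advice recognizing $L^{*}$.

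For $L^{*}$ I would take a context-free language whose palindromic ``pivot'' is not determined by input length. A natural candidate is
\[
L^{*} \;=\; \bigl\{\, 0^{i}\, 1\, w\, 1\, w^{R} : i \geq 0,\ w \in \{0,1\}^{+} \,\bigr\},
\]
whose context-freeness is witnessed by a 1npda that guesses $i$, pushes $w$, and then pops against $w^{R}$. The critical feature is that for each length $n$ there are $\Omega(n)$ candidate pivot positions, so the palindromic check must be performed at a pivot whose location varies from input to input, while the randomized advice is fixed given only $n$.

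Suppose for contradiction that $L^{*} \in \reg/Rn$. Invoking Theorem~\ref{1BPLIN-vs-REG}, fix a 1pfa $N$ with $k$ inner states, an advice alphabet $\Gamma$, a probability ensemble $\{D_{n}\}_{n \in \nat}$, and an error bound $\varepsilon < 1/2$ with $\Pr_{y \sim D_{n}}[N(\track{x}{y}) = L^{*}(x)] \geq 1 - \varepsilon$ for every $x$. I would then apply Yao's minimax principle length by length: it suffices to construct, for each sufficiently large $n$, an input distribution $\mu_{n}$ on $\Sigma^{n}$ such that for every fixed advice string $y \in \Gamma^{n}$,
\[
\Pr_{x \sim \mu_{n}}\bigl[\,N(\track{x}{y}) \neq L^{*}(x)\,\bigr] \;>\; \varepsilon .
\]
I would support $\mu_{n}$ on carefully matched pairs $(x,x')$ with $x \in L^{*}$, $x' \notin L^{*}$, differing only in the pivot position of their palindromic part; the randomization over which pivot is used provides the diagonalization against $y$.

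The core step is a collision argument in $N$'s state space. For any fixed $y$, the state distribution reached after reading each prefix lies in a compact subset of $\real^{k}$; combined with the $\Omega(n)$ candidate pivot positions available inside $L^{*}$, an $\epsilon$-net argument on the simplex of state distributions forces two distinct pivots $i \neq i'$ at which $N$'s trajectories under $y$ are nearly identical, so that $N$ is compelled to confuse the corresponding $L^{*}$ and $\overline{L^{*}}$ inputs with probability close to $1/2$. I expect this quantitative continuous pigeonhole to be the main obstacle, since the state distribution is a stochastic vector in $\real^{k}$ rather than a discrete quantity, so the standard finite pigeonhole is unavailable. The remedy I envision is to combine a linear-algebra / rank analysis in the spirit of the proofs of Lemmas~\ref{cequallin-lemma} and~\ref{oneplin-criteria} (where maximal bases of reachable state vectors were exploited) with a quantitative net on the simplex, turning the exact-rank collision of those lemmas into an $\varepsilon$-approximate collision suitable for the bounded-error randomized-advice regime.
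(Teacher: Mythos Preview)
Your reduction via Yao's principle is sound and is exactly what the paper does (Proposition~\ref{reg-to-average} and Lemma~\ref{minmax-average}): if $A\in\reg/Rn$ then for \emph{every} probability ensemble $\mu$ there is a 1dfa with deterministic advice agreeing with $A$ on a $(1-\varepsilon)$-fraction of $\mu_n$. Note that the detour through Theorem~\ref{1BPLIN-vs-REG} is unnecessary: $\reg/Rn$ is defined with 1dfa's, so after Yao the machine is already deterministic and the state reached after any prefix is a single element of $Q$. The continuous $\epsilon$-net you flag as ``the main obstacle'' collapses to an ordinary finite pigeonhole.

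The genuine gap is the collision step itself, and it does not close even in the discrete version. For your $L^{*}$, the candidate prefixes $0^{i}1$ and $0^{i'}1$ with $i\neq i'$ have different lengths, so even if they land in the same inner state the remaining advice segments $y[i{+}2..n]$ and $y[i'{+}2..n]$ are different; a prefix-state collision does not force identical behaviour on the continuations. If instead you collide same-length prefixes, then the first $1$ sits at the same position, hence the pivot $p=(n+i+2)/2$ is the same, and that slice of $L^{*}$ is essentially a shifted copy of $Pal_{\#}$---which \emph{is} in $\reg/Rn$ (Proposition~\ref{Pal-random-reg}). So any hardness must come from the interaction of all pivot positions at once, and you have not produced a distribution $\mu_n$ that defeats every advice string simultaneously. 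The appeal to Lemmas~\ref{cequallin-lemma} and~\ref{oneplin-criteria} is also misplaced: those arguments rely on exact linear identities among acceptance probabilities (equal to $1/2$, or of a prescribed sign), machinery that does not transfer to the bounded-error regime.

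The paper bypasses all of this by choosing a language for which the needed property is already in hand. It uses the context-free language $IP_{*}$, known from \cite{Yam09a} to be $\reg/n$-\emph{pseudorandom}: every language in $\reg/n$ disagrees with $IP_{*}$ on a $1/2\pm o(1)$ fraction of strings of each length. After the Yao reduction, a hypothetical recognizer for $IP_{*}$ would yield a language $B\in\reg/n$ agreeing with $IP_{*}$ on a $(1-\varepsilon)$-fraction under the uniform distribution, directly contradicting pseudorandomness (Lemma~\ref{pseudorandom-average}). The conceptual point you are missing is that ``no deterministic advice works under $\mu$'' is \emph{exactly} the statement that $A$ is far from every language in $\reg/n$ under $\mu$; pseudorandomness delivers this for $\mu=\mu_{\mathrm{uni}}$ with no further combinatorics.
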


{}From this theorem, we can deduce that $\cfl/Rn$ {\em properly}  contains $\reg/Rn$ because, otherwise, $\cfl/n$ should be  included in $\reg/Rn$, contradicting the theorem.

\begin{corollary}
$\reg/Rn\subsetneqq \cfl/Rn$.
\end{corollary}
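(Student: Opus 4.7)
The plan is to exhibit a specific context-free language $L_0$ with $L_0\notin\reg/Rn$, using the minimax / zero-sum-game principle alluded to in the paper's keywords. By Theorem \ref{1BPLIN-vs-REG}, $\reg/Rn=\onebplin/Rlin$, so it suffices to show $L_0\notin\onebplin/Rlin$. For the candidate $L_0$ I would avoid languages already captured by a hashing trick (as for $Dup$ in Proposition \ref{reg-cequal-cfl} or $Pal_{\#}$ in Proposition \ref{Pal-random-reg}) and instead pick a CFL whose membership at a typical position requires strictly more than $O(1)$ bits of ``prefix information'' that cannot be compressed into a linear-length random fingerprint---an unmarked palindrome-style language such as $Pal=\{ww^{R}\}$, or a Dyck-like stack language $D_{2}$, is the kind of candidate I have in mind.

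Assume for contradiction $L_{0}\in\reg/Rn$ via a 1dfa $M$ with a constant state set $Q$, a randomized advice ensemble $\{D_{n}\}_{n\in\nat}$ of linear length, and a bounded error $\varepsilon<1/2$. Applying Yao's minimax principle: for every probability distribution $\mu_{n}$ over $\Sigma^{n}$, there exists a single deterministic advice string $h_{n}^{*}$ under which the deterministic 1dfa $M(\cdot;h_{n}^{*})$ decides $L_{0}$ correctly on at least a $1-\varepsilon$ fraction of $\mu_{n}$. I would then choose $\mu_{n}$ as a balanced ``hard'' distribution, concentrated on inputs whose $L_{0}$-membership is determined by linked information straddling a prefix/suffix split at some position $i\leq n$---e.g.\ for $Pal$, equal weight on $\{xx^{R}\}$ and on non-palindromes of the same length $2n$.

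The analysis of the deterministic 1dfa $M(\cdot;h_{n}^{*})$ follows a crossing-sequence / equivalence-class argument analogous to the one used in \cite{TYL04} to prove $\cfl\nsubseteq\reg/n$. Partition prefixes of length $i$ into at most $|Q|$ classes $C_{i,q}=\{x\in\Sigma^{i}:M\text{ enters state }q\text{ at position }i\text{ with advice }h_{n}^{*}\}$. For any two prefixes $x_{1},x_{2}\in C_{i,q}$ and any common suffix $z$ of the correct length, $M(\track{x_{1}z}{h_{n}^{*}})=M(\track{x_{2}z}{h_{n}^{*}})$. Because $|Q|$ is constant while $\mu_{n}$ is supported on inputs requiring distinguishing exponentially many essentially-different prefixes against their completions, a weighted count of correct pairs $(x,z)$ yields an error that exceeds $\varepsilon$, contradicting the guarantee from Yao's principle and hence the assumption $L_{0}\in\reg/Rn$.

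The main obstacles are two. First, the choice of $L_{0}$ must be sharp enough that no randomized advice of linear length can act as a ``certificate'' for membership in the style of Proposition \ref{Pal-random-reg}: this rules out languages with a locally verifiable witness and points toward inherently non-local CFLs, which is the delicate part of selecting $L_{0}$. Second, the constant-state bottleneck must be turned into a quantitative error lower bound against the adversarial $\mu_{n}$ delivered by the minimax argument, not merely a qualitative collapse of equivalence classes; this is where the zero-sum-game viewpoint does the real work, since it lets us trade the randomized advice for an average-case deterministic advice against which classical fooling-set arguments for 1dfa's apply. The corollary $\reg/Rn\subsetneqq\cfl/Rn$ follows at once, since otherwise $\cfl/n\subseteq\cfl/Rn=\reg/Rn$ would contradict the theorem.
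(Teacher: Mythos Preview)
Your high-level scaffolding is exactly the paper's: use Yao's minimax principle to replace the randomized advice by a single deterministic advice string that is good on average against an adversarial input distribution (this is the paper's Lemma~\ref{minmax-average} and Proposition~\ref{reg-to-average}), and then argue that no deterministic-advice 1dfa can achieve bounded error on that distribution. The last sentence of your proposal---deriving the corollary from $\cfl\nsubseteq\reg/Rn$ via $\cfl\subseteq\cfl/Rn$---is also the paper's argument verbatim.

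The gap is in the middle, and it is not merely a matter of detail. Your candidate $Pal=\{ww^{R}\}$ is actually \emph{in} $\reg/Rn$, by precisely the hashing trick you say you are avoiding: for even length $2n$ the midpoint is determined by the length, so random advice $yy^{R}$ with $y\in\{0,1\}^{n}$ lets a constant-state 1dfa test $\sum_{i}x_{i}y_{i}\equiv\sum_{i}x_{2n+1-i}y_{i}\pmod 2$, with one-sided error $1/2$, reducible to $1/4$ by repetition---exactly as in Proposition~\ref{Pal-random-reg}. This example also refutes the intuition behind your counting step: here the 1dfa has $O(1)$ equivalence classes at the midpoint, yet achieves bounded error against \emph{every} distribution. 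So ``constant state set $\Rightarrow$ error exceeds $\varepsilon$ under a balanced hard distribution'' is simply false without a much stronger property of $L_{0}$, and a fooling-set argument (which is inherently worst-case) does not supply the average-case lower bound you need after Yao.

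What the paper does instead is isolate exactly the required average-case property: it uses the language $IP_{*}$ and invokes the fact (from \cite{Yam09a}) that $IP_{*}$ is $\reg/n$-\emph{pseudorandom}, meaning every language in $\reg/n$ disagrees with $IP_{*}$ on a $1/2\pm o(1)$ fraction of $\Sigma^{n}$. Lemma~\ref{pseudorandom-average} then converts pseudorandomness directly into $(IP_{*},\mu_{\mathrm{uni}})\notin\averreg/n$, which together with Proposition~\ref{reg-to-average} gives $IP_{*}\notin\reg/Rn$. In other words, the ``quantitative error lower bound'' you correctly identify as the crux is obtained not from an equivalence-class count but from a pseudorandomness statement about $IP_{*}$; your proposal would need an analogous statement for whatever $L_{0}$ you choose, and neither $Pal$ nor an unspecified $D_{2}$ comes with one.
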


Henceforth, we shall prove Theorem \ref{cfl-vs-random-reg}. In order to 
do so, we borrow an idea from communication complexity theory because our model of randomized advice is loosely related to a model of two-party one-way communication with shared randomness. 
First, we introduce a new complexity class $\averreg/n$. (As for an introduction to average-case computational complexity theory, the reader may refer to \cite{Yam97} for instance.) The class  $\averreg/n$ consists of all {\em distributional problems} $(A,\mu)$, where $A$ is a language over an alphabet $\Sigma$ and $\mu=\{\mu_n\}_{n\in\nat}$ is a probability ensemble over $\Sigma^*$, such that there exist a 1dfa $M$, an advice function $h$, and an error bound $\varepsilon\in[0,1/2)$ satisfying the following condition: for every length $n\in\nat$, $\prob_{x\sim\mu_n}[M\left( \track{x}{h(n)}\right) = A(x)] \geq 1-\varepsilon$, where ``$x\sim\mu_n$'' means that $x$ is chosen randomly according to $\mu_n$.

\begin{proposition}\label{reg-to-average}
If $A$ belongs to $\reg/Rn$, then $(A,\mu)$ is in $\averreg/n$ for any probability ensemble  $\mu$.
\end{proposition}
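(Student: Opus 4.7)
The plan is to derive the deterministic advice function $h$ for the average-case model by a standard averaging (Fubini-style) argument over the randomized advice distribution. First I would unpack the hypothesis $A\in\reg/Rn$: there exist a 1dfa $M$, an advice alphabet $\Gamma$, a probability ensemble $\{D_n\}_{n\in\nat}$ with each $D_n$ supported on $\Gamma^n$, and an error bound $\varepsilon\in[0,1/2)$ such that for every $n$ and every $x\in\Sigma^n$,
\[
\prob_{w\sim D_n}\!\left[M\!\left(\track{x}{w}\right)=A(x)\right]\;\geq\;1-\varepsilon.
\]

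Next, fix an arbitrary probability ensemble $\mu=\{\mu_n\}_{n\in\nat}$ over $\Sigma^*$. For each $n$, regard $x$ and $w$ as independent random variables drawn from $\mu_n$ and $D_n$ respectively, and consider the indicator $\chi(x,w)=1$ iff $M(\track{x}{w})=A(x)$. Taking expectations in two orders and applying the hypothesis gives
\[
\mathbb{E}_{w\sim D_n}\!\left[\prob_{x\sim\mu_n}[\chi(x,w)=1]\right]
\;=\;\mathbb{E}_{x\sim\mu_n}\!\left[\prob_{w\sim D_n}[\chi(x,w)=1]\right]
\;\geq\;1-\varepsilon.
\]
Since the outer expectation on the left is at least $1-\varepsilon$, by the averaging principle there must exist some particular string $w_n^{*}\in\Gamma^n$ in the support of $D_n$ (equivalently, in $\Gamma^n$) such that
\[
\prob_{x\sim\mu_n}\!\left[M\!\left(\track{x}{w_n^{*}}\right)=A(x)\right]\;\geq\;1-\varepsilon.
\]

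Now I would define the advice function $h:\nat\to\Gamma^*$ by setting $h(n)=w_n^{*}$ for every $n\in\nat$; note that $|h(n)|=n$, matching the length requirement of the $\averreg/n$ model. With the same 1dfa $M$, the same advice alphabet $\Gamma$, the advice function $h$, and the same error bound $\varepsilon$, the displayed inequality above is exactly the definition of $(A,\mu)\in\averreg/n$. Since $\mu$ was arbitrary, the proposition follows.

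There is essentially no significant obstacle here: the only subtlety is confirming that the advice length matches (which it does, since $D_n$ is supported on $\Gamma^n$) and that the existence of $h(n)$ is purely non-constructive, which is permitted because advice functions in this paper need not be algorithmically computable. The argument is a clean probabilistic derandomization over the input distribution, completely analogous to the classical Adleman-style amplification but applied to the input marginal rather than to the machine's internal coins.
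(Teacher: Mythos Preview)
Your proof is correct and is essentially the paper's own argument. The paper packages this as the $(1\Rightarrow 2)$ direction of a characterization lemma (Lemma~\ref{minmax-average}) and phrases the averaging step as a proof by contradiction, but the content is identical: swap the order of expectation over $(x,w)$ and pick a single advice string achieving at least the average.
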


Recall that our goal is to present a context-free language $A$ that does not belong to $\reg/Rn$. Toward this goal, with a help of Proposition \ref{reg-to-average}, it suffices to show that the distributional problem $(A,\mu)$ does not belong to $\averreg/n$ for a certain probability ensemble $\mu$. We shall present a simple example of such language $A$, known as a $\reg/n$-pseudorandom language \cite{Yam09a}. 
Formally, a language $L$ over an alphabet $\Sigma$ is called {\em $\reg/n$-pseudorandom} if, for every language $A\in \reg/n$ over $\Sigma$, the function $\ell(n) =_{def} \left| \frac{|(A\triangle L)\cap \Sigma^n|}{|\Sigma^n|} -\frac{1}{2} \right|$ is negligible, where $A\triangle L$ denotes the {\em symmetric difference} between $A$ and $L$ (\ie $A\triangle L = (A\setminus L)\cup(L\setminus A)$). 

\begin{lemma}\label{pseudorandom-average}
\sloppy If $A$ is $\reg/n$-pseudorandom, then $(A,\mu_{\rm uni})$ is not in $\averreg/n$, where $\mu_{\rm uni}=\{\mu_{{\rm uni},n}\}_{n\in\nat}$ is the uniform probability ensemble over $\Sigma^*$ (\ie each $\mu_{{\rm uni},n}$ is the uniform probability distribution over $\Sigma^n$).
\end{lemma}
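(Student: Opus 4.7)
The plan is to argue by contradiction. Assuming $(A,\mu_{\rm uni})\in\averreg/n$, I would extract from the witnessing 1dfa with advice a genuine $\reg/n$ language whose density of disagreement with $A$ is bounded by a constant strictly less than $1/2$; this would directly violate the pseudorandomness of $A$.

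More concretely, suppose $(A,\mu_{\rm uni})\in\averreg/n$, witnessed by a 1dfa $M$, an advice function $h$, and an error bound $\varepsilon\in[0,1/2)$ such that $\prob_{x\sim\mu_{{\rm uni},n}}[M(\track{x}{h(n)})=A(x)]\geq 1-\varepsilon$ for every $n\in\nat$. First, I would define the language $B=\{x\in\Sigma^{*}\mid M\text{ accepts }\track{x}{h(|x|)}\}$. By the very definition of $\reg/n$, the pair $(M,h)$ places $B$ in $\reg/n$.

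Next, I would translate the success-probability condition into a statement about the symmetric difference. Because $\mu_{{\rm uni},n}$ is uniform over $\Sigma^{n}$, we have
\[
\prob_{x\sim\mu_{{\rm uni},n}}[M(\track{x}{h(n)})\neq A(x)]
\;=\;\frac{|(A\triangle B)\cap\Sigma^{n}|}{|\Sigma^{n}|},
\]
so the $\averreg/n$ guarantee yields $|(A\triangle B)\cap\Sigma^{n}|/|\Sigma^{n}|\leq\varepsilon$ for every $n\in\nat$. Consequently, the deviation function associated with $B$ in the definition of $\reg/n$-pseudorandomness satisfies
\[
\ell(n)\;=\;\left|\,\frac{|(A\triangle B)\cap\Sigma^{n}|}{|\Sigma^{n}|}-\frac{1}{2}\,\right|\;\geq\;\frac{1}{2}-\varepsilon,
\]
which is a positive constant independent of $n$ since $\varepsilon<1/2$.

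Finally, I would invoke pseudorandomness to conclude. Taking the constant polynomial $p(n)\equiv\lceil 2/(1/2-\varepsilon)\rceil$ gives $1/p(n)<1/2-\varepsilon\leq\ell(n)$ for all $n$, so $\ell$ is not negligible. This contradicts the assumption that $A$ is $\reg/n$-pseudorandom with respect to the language $B\in\reg/n$, and therefore $(A,\mu_{\rm uni})\notin\averreg/n$. There is no substantive obstacle here: the argument is essentially a direct unfolding of the two definitions, and the only point requiring mild care is ensuring that the error bound $\varepsilon$ of the $\averreg/n$-machine is uniform in $n$, which is guaranteed by the definition.
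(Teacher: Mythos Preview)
Your proof is correct and follows essentially the same approach as the paper: both define $B=\{x\mid M\text{ accepts }\track{x}{h(|x|)}\}\in\reg/n$, rewrite the average-case error as $|(A\triangle B)\cap\Sigma^n|/|\Sigma^n|\leq\varepsilon$, and conclude that $\ell(n)\geq 1/2-\varepsilon$ is bounded away from zero, contradicting pseudorandomness. The only cosmetic differences are that the paper phrases it as a contrapositive and writes $\varepsilon'=1/2-\varepsilon$, whereas you argue by contradiction and exhibit an explicit polynomial witnessing non-negligibility.
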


\begin{proof}
We prove the lemma by contrapositive. Let $(A,\mu_{\rm uni})$ be any distributional problem in $\averreg/n$ over an alphabet $\Sigma$. 
Take a 1dfa $M$, an error bound $\varepsilon\in[0,1/2)$,  and an 
advice function $h$ satisfying that $\prob_{x\sim\mu_{{\rm uni},n}}[M(\track{x}{h(n)})=A(x)]\geq 1-\varepsilon$ for every length $n\in\nat$. 
For our convenience, we set $\varepsilon = \frac{1}{2} - \varepsilon'$ with $\varepsilon' >0$. Now, let us define $B = \{x\in\Sigma^*\mid M(\track{x}{h(|x|)})=1\}$ and consider the symmetric difference $B\triangle A$. It then follows that  
\[
\prob_{x\sim\mu_{{\rm uni},n}}[x\in B\triangle A]
= \prob_{x\sim\mu_{{\rm uni},n}}[M(\track{x}{h(n)})\neq A(x)]
\leq \frac{1}{2} - \varepsilon'.
\] 
Note that, since $\mu_{{\rm uni},n}$ is uniform, 
$\prob_{x\sim\mu_{{\rm uni},n}}[x\in B\triangle A]$ equals $|(B\triangle A)\cap\Sigma^n|/|\Sigma^n|$. {}From the above bound, the value $\ell(n)$ can be lower-bounded as  
\[
\ell(n) = \left| \frac{|(B\triangle A)\cap\Sigma^n|}{|\Sigma^n|} - \frac{1}{2} \right| 
= \left|\prob_{x\sim\mu_{{\rm uni},n}}[x\in B\triangle A] - \frac{1}{2} \right| 
\geq \varepsilon'.
\] 
This means that $A$ cannot be $\reg/n$-pseudorandom. 
\end{proof}

With a use of Proposition \ref{reg-to-average} and Lemma \ref{pseudorandom-average}, it becomes rather an easy task to prove 
Theorem \ref{cfl-vs-random-reg}, since we already know from \cite{Yam09a} that the context-free language $IP_{*}$ is in fact $\reg/n$-pseudorandom.

\begin{proofof}{Theorem \ref{cfl-vs-random-reg}}
Assume that $\cfl\subseteq \reg/Rn$. By Proposition \ref{reg-to-average}, for every context-free language $A$, the distributional problem $(A,\mu_{\rm uni})$ belongs to $\averreg/n$, where $\mu_{\rm uni}$ is the uniform probability ensemble. Lemma \ref{pseudorandom-average} further implies that $A$ cannot be $\reg/n$-pseudorandom. In summery, no  context-free language is $\reg/n$-pseudorandom. This contradicts the fact that $IP_{*}$ is a $\reg/n$-pseudorandom context-free language \cite{Yam09a}. Therefore, it should hold that $\cfl\nsubseteq \reg/Rn$.
\end{proofof}

To close this section, we still need to prove 
Proposition \ref{reg-to-average}. Its proof, in fact, follows 
immediately from a new characterization of $\reg/Rn$ given below. This characterization is a direct  consequence of Yao's principle\footnote{Yao's principle is a randomized algorithmic interpretation of von Neumann's \cite{New28} celebrated {\em minmax theorem} in game theory.} \cite{Yao77} and 
it is, to some extent, analogous to an existing result on one-way communication with public coins. 

\begin{lemma}\label{minmax-average}
Let $A$ be any language over an alphabet $\Sigma$. The following two statements are equivalent. 
\begin{enumerate}
\item $A$ is in $\reg/Rn$.

\item There exist a 1dfa $M$, an advice alphabet $\Gamma$, and an error bound $\varepsilon\in[0,1/2)$ that satisfy the following condition: for every probability ensemble $\{\mu_n\}_{n\in\nat}$ over $\Sigma^*$, there exists an advice function $h:\nat\rightarrow\Gamma^*$ such that $\prob_{x\sim\mu_n}[M(\track{x}{h(n)})=A(x)]\geq 1-\varepsilon$
for every length $n\in\nat$.
\end{enumerate}
\end{lemma}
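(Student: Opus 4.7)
The plan is to treat this as the classic Yao-principle equivalence, where statement~(1) is the ``algorithm side'' and statement~(2) is the ``distribution side'' of a zero-sum game between the advice-giver and an adversary choosing the input.

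For the direction $(1)\Rightarrow(2)$, I would take a witnessing 1dfa $M$, error bound $\varepsilon$, and randomized advice $\{D_n\}_{n\in\nat}$ from the definition of $\reg/Rn$; note that the advice length is already $n$ since we are working with $\reg/Rn$ (whose advice alphabet $\Gamma$ is fixed). Given any probability ensemble $\{\mu_n\}_{n\in\nat}$ over $\Sigma^*$, I would average the inequality $\prob_{D_n}[M(\track{x}{D_n})=A(x)]\geq 1-\varepsilon$ over $x\sim \mu_n$ and then swap the two (independent) expectations, obtaining $E_{w\sim D_n}\bigl[\prob_{x\sim\mu_n}[M(\track{x}{w})=A(x)]\bigr]\geq 1-\varepsilon$. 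A standard averaging argument then yields at least one concrete advice string $w_n\in\Gamma^n$ achieving $\prob_{x\sim\mu_n}[M(\track{x}{w_n})=A(x)]\geq 1-\varepsilon$, and setting $h(n)=w_n$ gives the required deterministic advice function.

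For the direction $(2)\Rightarrow(1)$, which is the interesting content of the lemma, I would fix $n$ and set up a finite two-person zero-sum game $G_n$: the adversary's pure strategies are the inputs $x\in\Sigma^n$, the advice-giver's pure strategies are the advice strings $w\in\Gamma^n$, and the payoff to the advice-giver equals $1$ when $M(\track{x}{w})=A(x)$ and $0$ otherwise. The hypothesis, applied with $\mu_n$ ranging over all distributions on $\Sigma^n$, says exactly that for every mixed strategy $\mu_n$ of the adversary there is a pure strategy $w$ for the advice-giver with expected payoff at least $1-\varepsilon$; equivalently $\min_{\mu_n}\max_{w}\mathrm{payoff}(\mu_n,w)\geq 1-\varepsilon$. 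Since $G_n$ is a finite-dimensional matrix game with rational payoffs, von Neumann's minimax theorem converts this into the reversed inequality $\max_{D_n}\min_{x}\mathrm{payoff}(x,D_n)\geq 1-\varepsilon$, producing a distribution $D_n$ over $\Gamma^n$ such that $\prob_{w\sim D_n}[M(\track{x}{w})=A(x)]\geq 1-\varepsilon$ holds uniformly for every $x\in\Sigma^n$. Gluing the $D_n$ together yields a probability ensemble that, together with the same fixed 1dfa $M$, places $A$ in $\reg/Rn$.

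The main obstacle, and the only step beyond bookkeeping, is checking that minimax applies in the form needed. Two points require attention: first, the advice alphabet $\Gamma$ and the 1dfa $M$ from statement~(2) must be fixed independently of the input distribution, so that a single pair $(M,\Gamma)$ can serve as the underlying resources for the witness $\{D_n\}_{n\in\nat}$ across all $n$; the statement is phrased precisely so that this is guaranteed. Second, since each $G_n$ has finite strategy sets, we do not need any continuity or compactness subtleties beyond the classical finite minimax theorem, and the error bound carried through the argument is the same $\varepsilon<1/2$, so the resulting $\{D_n\}_{n\in\nat}$ indeed witnesses $A\in\reg/Rn$ with bounded error. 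Proposition~\ref{reg-to-average} then follows as an immediate consequence: given $A\in\reg/Rn$, apply $(1)\Rightarrow(2)$ to any $\mu$ to extract the deterministic advice placing $(A,\mu)$ in $\averreg/n$.
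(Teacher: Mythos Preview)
Your proposal is correct and follows essentially the same route as the paper: both directions are handled via Yao's principle, with $(1)\Rightarrow(2)$ obtained by averaging over the product distribution $\mu_n\times D_n$ (the paper phrases this step as a proof by contradiction, while you state it as a direct averaging argument, but the content is identical) and $(2)\Rightarrow(1)$ obtained by setting up, for each $n$, the finite zero-sum game on $\Sigma^n\times\Gamma^n$ and invoking von Neumann's minimax theorem to extract the randomized advice $D_n$. Your remarks about $M$ and $\Gamma$ being fixed across all distributions and about the finiteness of the game are exactly the points the paper relies on implicitly.
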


\begin{proof}
(1 $\Rightarrow$ 2)
Let $A\in\reg/Rn$ over $\Sigma$, witnessed by $M$, $\varepsilon$, 
$\Gamma$, 
and $\{D_n\}_{n\in\nat}$ over $\Gamma^*$; that is, for every length $n\in\nat$ and for every input $x\in\Sigma^n$, it holds that 
$\prob_{y\sim D_n}[M\left(\track{x}{y}\right) \neq A(x)]\leq \varepsilon$. 
Assuming that $\{\mu_n\}_{n\in\nat}$ is a probability ensemble over $\Sigma^*$, we claim that 
\begin{quote}
\begin{itemize}
\item[(*)] for every length $n\in\nat$, there exists an advice string $y_n\in\Gamma^n$ satisfying that  $\prob_{x\sim\mu_n}[M\left(\track{x}{y_n}\right)\neq A(x)] \leq \varepsilon$. 
\end{itemize}
\end{quote}
Using this special string $y_n$, we can define the desired advice function $h$ as $h(n)=y_n$ for each length $n\in\nat$. This clearly yields (2).

Let us prove Statement (*). Assume otherwise; namely, for a certain fixed $n\in\nat$,  $\prob_{x\sim\mu_n}[M\left(\track{x}{y}\right)\neq A(x)] > \varepsilon$ for every string $y\in\Gamma^n$. 
Consider the product distribution $\nu(x,y) = \mu_n(x)D_n(y)$ for any 
pair  $(x,y)\in\Sigma^*\times\Gamma^*$. We shall estimate the value $\gamma_n = \prob_{(x,y)\sim\nu}[M\left(\track{x}{y}\right)\neq A(x)]$ in two different ways. This value $\gamma_n$ is upper-bounded as  
\begin{eqnarray*}
\gamma_n 
&=& \sum_{x\in\Sigma^n}\mu_n(x)\cdot \prob_{y\sim D_n}[M\left(\track{x}{y}\right)\neq A(x)] 
\;\;\leq\;\; \sum_{x\in\Sigma^n}\mu_n(x)\cdot \varepsilon 
\;\;=\;\; \varepsilon.
\end{eqnarray*}
However, the same value $\gamma_n$ is lower-bounded as  
\begin{eqnarray*}
\gamma_n 
&=& \sum_{y\in\Gamma^n}D_n(y)\cdot \prob_{x\sim \mu_n}[M\left(\track{x}{y}\right)\neq A(x)]
\;\;>\;\; \sum_{y\in\Gamma^n}D_n(y)\cdot \varepsilon 
\;\;=\;\; \varepsilon.
\end{eqnarray*}
The above two bounds clearly lead to a contradiction. Therefore, Statement (*) must hold. 

(2 $\Rightarrow$ 1)
Assume that there exists a 1dfa $M$, an alphabet $\Gamma$, and a constant $\varepsilon$ such that, for every probability ensemble $\mu=\{\mu_n\}_{n\in\nat}$, there exists  an advice function  $h_{\mu}:\nat\rightarrow\Gamma^*$ satisfying $\prob_{x\sim\mu_n}[M(\track{x}{h_{\mu}(n)})=A(x)]\geq 1-\varepsilon$. 
Fix $n\in\nat$ arbitrarily. 
Let us consider the following {\em two-player zero-sum game}.  
\begin{quote}
Player 1 chooses 
$y$ in $\Gamma^n$ and Player 2 chooses $x$ in $\Sigma^n$ randomly 
according to $\mu_n$. Player 1's payoff $P_{x,y}$ is $1$ if $M(\track{x}{y})=A(x)$, and $0$ otherwise. 
\end{quote}
When Player 1 tries to maximize his payoff and Player 2 tries to minimize his own payoff, we obtain the inequality 
\[
\min_{\mu_n}\max_{y}\sum_{x}\mu_n(x)P_{x,y} \geq \min_{\mu_n} \prob_{x\sim\mu_n}[M(\track{x}{h_{\mu}(n)}) = A(x)] \geq 1-\varepsilon.
\]
By Yao's principle \cite{Yao77}, it follows that 
\[
 \max_{\rho_n}\min_{x}\sum_{y}\rho_n(y)P_{x,y} = \min_{\mu_n}\max_{y}\sum_{x}\mu_n(x)P_{x,y} \geq 1-\varepsilon,
\]
where $\rho_n$ is a  probability distribution over $\Gamma^n$. 
We choose a particular $\rho_n$ that satisfies the above equality, and we  
define $D_n(y)=\rho_n(y)$ for each string $y\in\Gamma^n$. 
By the choice of $\rho_n$, for every string $z\in\Sigma^n$, we obtain 
\[
\prob_{y\sim D_n}[M(\track{z}{y})=A(z)] 
\geq \min_{x}\sum_{y}\rho_n(x)P_{x,y} \geq 1-\varepsilon.
\]
Therefore, with bounded-error probability, $M$ recognizes $A$ using 
the randomized advice $\{D_n\}_{n\in\nat}$. This means that $A$ 
belongs to $\reg/Rn$.  
\end{proof}

\section{Brief Discussion}

Throughout this paper, we have shown strengths and weaknesses 
of deterministic and randomized advice when it is given particularly 
to weak models of one-tape linear-time Turing machines and finite 
automata. 
Such weak models have made it possible to prove collapses and separations among advised language families, as shown in Figure~\ref{fig:hierarchy}, with no unproven assumption. 

Many class separations that have been proven so far are obtained 
in fact by discriminating 
two machines' abilities to extract key information from a given piece of advice. However, 
there are still numerous open questions, which we need much more 
sophisticated arguments to solve. For instance, we would like to prove/disprove the following: $\onecequallin/lin\nsubseteq \reg/Rn$, $\cfl/n\nsubseteq \onecequallin/lin$, $\reg/Rn\nsubseteq \oneplin/lin$, 
and $\oneplin/lin\nsubseteq \reg/Rn$ (even more strongly, $\oneplin/lin\nsubseteq\cfl/Rn$). 

In other research directions, one of the  
challenging tasks is to prove/disprove that $\cfl(k)/Rn\nsubseteq \cfl(k+1)/n$ for each index $k\in\nat^{+}$, where $\cfl(k) = \{\bigcap_{i=1}^{k}L_i\mid i\in[k],L_i\in\cfl\}$ \cite{Yam09a}. 
Moreover, a thorough investigation on $\averreg/n$ is certainly 
another important challenge, which may expand an existing scope of formal language and automata theory. 

\bs
\paragraph{Acknowledgments.}

The author is grateful to the Mazda Foundation and the Japanese 
Ministry of Education, Science, Sports, and Culture for their constant 
support during his research.


\end{document}